\theoremstyle{definition}
\newtheorem{lemma}{Lemma}[section]
\newtheorem*{lemma*}{Lemma}
\newtheorem*{corollary*}{Corollary}
\newtheorem*{theorem*}{Theorem}
\newtheorem*{inducthyp*}{Inductive Hypothesis}
\newtheorem*{definition*}{Definition}
\newtheorem{definition}[lemma]{Definition}
\newtheorem{prob}[lemma]{Problem}
\newtheorem*{rem*}{Remark}
\definecolor{darkgreen}{rgb}{0.0, 0.5, 0.0}
\definecolor{darkred}{rgb}{0.82, 0.1, 0.26}
\definecolor{mycolor}{rgb}{0.1, 0.5, 0.8}
\newcommand{\placeholder}[1]{%
  {\texttt{\textcolor{NavyBlue}{#1}}}%
}
\newcommand{\modify}[1]{{\color{black} {#1}}}
\newcommand{\modifya}[1]{{\color{black} {#1}}}
\newcommand{\modifyb}[1]{{\color{black} {#1}}}
\newcommand{\modifyc}[1]{{\color{black} {#1}}}
\newcommand{\toolname}{\textsf{Argus}\xspace}
\newcommand{\bugsnumber}{41\xspace}
\newcommand{\confirmednumber}{36\xspace}
\newcommand{\fixednumber}{27\xspace}
\newcommand{\logicnumber}{36\xspace}
\newcommand{\othernumber}{5\xspace}
\newcommand{\sqlkeyword}[1]{{\ttfamily {#1}}}
\definecolor{NavyBlue}{rgb}{0,0,0.5}
\definecolor{lstString}{HTML}{576574}
\definecolor{myGreen}{HTML}{009432}
\definecolor{bugColor}{HTML}{c0392b}
\definecolor{sqlinlineColor}{HTML}{000000}
\lstdefinestyle{sqlstyle}{
    language=SQL,
    basicstyle=\ttfamily,    
    keywordstyle=\textbf,
    keywordstyle=[2]{\textbf},
    morekeywords={WITH, TEXT, BOOLEAN, LATERAL, MOD, ANALYZE, EXIST, ASOF, IS, EXPLAIN, STRICT, RETURN, json_valid, json_array_length, json_array},
    morekeywords=[2]{query, clause, pattern, ret},
    stringstyle=\color{lstString},
    commentstyle=\color{gray},
    backgroundcolor=\color{white!5},
    frame=single,
    framerule=0pt,
    rulecolor=\color{black},
    numbers=left,
    numberstyle=\footnotesize,
    numbersep=7pt,
    showspaces=false,
    showstringspaces=false,
    keepspaces=true,        
    showtabs=false,
    tabsize=4,
    flexiblecolumns=true,
    breaklines=true,
    breakatwhitespace=false,
    breakautoindent=true,
    breakindent=1em,
    escapeinside=``,
    xleftmargin=2em,  
    framexleftmargin=4ex,
    captionpos=t,
}
\newif\ifcomments
    \providecommand{\alvin}[1]{{\color{brown}{alvin: #1 }}}
    \providecommand{\alvin}[1]{}
\Crefname{section}{Sec.}{Sec.}
\Crefname{algorithm}{Alg.}{Alg.}
\Crefname{figure}{Fig.}{Fig.}
\author{Qiuyang Mang}
\affiliation{%
  \institution{UC Berkeley}
  \country{USA}
}
\email{qmang@berkeley.edu}
\author{Runyuan He}
\affiliation{%
  \institution{UC Berkeley}
  \country{USA}
}
\author{Suyang Zhong}
\affiliation{%
  \institution{National University of Singapore}
  \country{Singapore}
}
\author{Xiaoxuan Liu}
\affiliation{%
  \institution{UC Berkeley}
  \country{USA}
}
\author{Huanchen Zhang}
\affiliation{%
  \institution{Tsinghua University}
  \country{China}
}
\email{huanchen@tsinghua.edu.cn}
\author{Alvin Cheung}
\affiliation{%
  \institution{UC Berkeley}
  \country{USA}
}
\keywords{Database Management Systems, Software Testing, Test Oracles, Large Language Models, SQL, Logic Bugs}
\begin{document}

\title{Automated Discovery of Test Oracles for Database Management Systems Using LLMs}

\received{October 2025}
\received[revised]{January 2026}
\received[accepted]{February 2026}

\begin{abstract}
Since 2020, automated testing for Database Management Systems (DBMSs) has flourished, uncovering hundreds of bugs in widely-used systems. A cornerstone of these techniques is \emph{test oracle}, which typically implements a mechanism to generate equivalent query pairs, and subsequently runs the pair and identifies bugs by checking the consistency of their results. While running these oracles can be automated, designing the mechanism to generate equivalent queries remains a fundamentally manual endeavor. This paper explores the use of large language models (LLMs) to automate the discovery of equivalent queries in the design of test oracles, addressing a long-standing bottleneck towards fully automated DBMS testing. 

Although LLMs demonstrate impressive creativity, they are prone to hallucinations that can produce numerous false positive bug reports. Furthermore, their high monetary cost and latency mean that LLM invocations should be limited to ensure that bug detection is efficient and economical.
To this end, we introduce \toolname, a novel framework built upon the core concept of the \emph{Constrained Abstract Query}---a SQL skeleton containing placeholders and their associated instantiation conditions, \emph{e.g.,} the placeholder must be filled by a Boolean column. \toolname uses LLMs to generate pairs of these skeletons, with their equivalence formally proven using a SQL equivalence solver to ensure soundness. After that, the placeholders in the verified skeletons are instantiated with concrete, reusable SQL snippets that are also synthesized by LLMs to produce complex test cases.
We have implemented \toolname and evaluated it on five extensively tested DBMSs, discovering \bugsnumber previously unknown bugs, \logicnumber of which are logic bugs, with \confirmednumber confirmed and \fixednumber already fixed by the developers. \modifyc{The artifacts for \toolname are available at \url{https://github.com/joyemang33/Argus}}
\end{abstract}
\maketitle

\begin{section}{Introduction}
    \label{sec:introduction}

    Database Management Systems (DBMSs) are a foundational component of modern software, yet their complexity makes them prone to bugs that can compromise application behavior and data integrity.
    Logic bugs are particularly insidious; they cause a DBMS to return incorrect results without raising errors, therefore silently corrupting downstream applications~\cite{rigger2020detecting}.
    In response, the research community has developed automated testing techniques~\cite{rigger2020testing,rigger2020detecting,rigger2020finding,jiang2024detecting,zhang2025constant} that have discovered hundreds of bugs in real-world systems.

    The most critical component of these techniques is the \emph{test oracle}~\cite{barr2014oracle} that can determine the correctness of a query's output without access to the ground truth.
    Given a query, oracles for DBMS testing implement a transformation mechanism to generate a semantically equivalent variant.
    Then, by executing both queries and checking for result consistency, these oracles can detect logic bugs.
    For example, Ternary Logic Partitioning (TLP)~\cite{rigger2020finding} is a highly effective oracle that partitions a query $Q$ based on a predicate $P$ and then checks equivalence between $Q$ and the union of its three-way partition
    $Q$ \sqlkeyword{WHERE} $P$, $Q$ \sqlkeyword{WHERE} \sqlkeyword{NOT} $P$, and $Q$ \sqlkeyword{WHERE} $P$ \sqlkeyword{IS} \sqlkeyword{NULL}. 
    If a DBMS returns different results for the original query and its partitioned version, that indicates a bug.

    While human-designed oracles have found many bugs (reported in over 20 top-conference papers), the manual creation of oracles is a bottleneck, trapping researchers in a cycle of designing increasingly specialized oracles to find bugs missed by previous ones.
    For example, a TiDB~\cite{huang2020tidb} bug introduced in 2019~\cref{lst:tidb-except-bug} went undetected for years despite extensive testing\footnote{\href{https://github.com/pingcap/tidb/blob/7de620055dd290fde1d90368688b40895bec506b/types/mydecimal.go\#L1279}{Commit 7de6200}, introduced in 2019.}, because it required a specific oracle to check that any query $Q$ \sqlkeyword{EXCEPT} $Q$ should yield an empty result.
    This example highlights a fundamental challenge: manually designed oracles are not only difficult to conceive but also tend to overlook bugs. 
    Automating the discovery of test oracles is therefore essential to enable scalable DBMS testing.

\begin{figure}[t]
\vspace*{4ex}
    \centering
\begin{lstlisting}[label={lst:tidb-except-bug}, caption={Motivating bug: incorrect {\ttfamily EXCEPT} result in TiDB, which can only be detected by a specific test oracle.}, captionpos=b, escapeinside={(*}{*)},]
CREATE TABLE t1(c INT);
INSERT INTO t1 VALUES (1);
SELECT c / 3 FROM t1 WHERE false; -- {} (*\faCheck{}*)
SELECT c / 3 FROM t1 EXCEPT SELECT c / 3 FROM t1;
-- {0.3333} (*\faBug{}*) 
\end{lstlisting}
\vspace*{-4ex}
\end{figure}

    Large Language Models (LLMs), with their success in code generation~\cite{hong2025autocomp, novikov2025alphaevolve}, offer a promising avenue for automating this process.
    A naive approach would be to prompt an LLM to generate a semantically equivalent variant for a given seed SQL query.
    However, this strategy suffers from two limitations:

    \begin{enumerate}[leftmargin=*, itemsep=2pt]
        \item \textbf{Scalability:} LLM invocations are costly and have high latency.
        Compared to traditional SQL generators~\cite{Seltenreich2022sqlsmith, fu2023griffin}, this is infeasible for modern DBMS testing, which often requires executing thousands of queries per minute to find bugs efficiently, such as in SQLancer~\cite{ba2023testing}.

        \item \textbf{Soundness:} LLMs are prone to hallucination and may generate query pairs that are not truly equivalent.
        Such unsound oracles produce false positives, thereby undermining the reliability of the testing process.
        Note that verifying equivalence empirically by running the query pairs on multiple database instances is unreliable, as this cannot guarantee the complete removal of semantically inequivalent pairs~\cite{he2024verieql}, and may also filter out true bugs that might exist on all databases under test~\cite{slutz1998massive}.
    \end{enumerate}

    \noindent\paragraph{Key insights}
    To address these challenges, we propose \toolname, a novel, fully automated framework for finding logic bugs in DBMSs.
    \toolname uses a two-stage process that separates oracle discovery from test case generation. First, it leverages an LLM in an offline phase to discover reusable test oracles. Then, these oracles are formally verified for correctness before being instantiated into thousands of concrete test cases.
    Since generating abstract test oracles is much cheaper and faster than directly using LLMs to craft concrete test cases, this approach tackles both scalability and soundness head-on. Intuitively, these abstract oracles are parameterized query templates that can be instantiated many times with different SQL snippets. 

    
    \vspace{-1ex}
    \paragraph{Test oracle representation}
    \toolname introduces a novel and expressive representation for test oracles that can be easily understood by LLMs called the \emph{equivalent Constrained Abstract Query (CAQ) pair}. 
    A CAQ is an abstract SQL query with {\em placeholders} that can represent tables, columns, predicates, and other SQL components.
    These placeholders are also associated with the corresponding constraints, which are essential to implement test oracles.
    For example, a CAQ with two placeholders might specify that placeholders must be syntactically identical, or that a predicate must be a Boolean expression when instantiated.
    An {\em equivalent CAQ pair} consists of two CAQs that are semantically identical for every possible instantiation of placeholders, 
    \emph{i.e.,} for all possible fillings of the placeholders with concrete SQL snippets that satisfy their constraints.
    CAQs allow \toolname to represent a wide range of test oracles, including several existing ones.
    For instance, TLP~\cite{rigger2020finding} can be represented as an equivalent CAQ pair, \emph{i.e.,} $Q_1$ and $Q_2$, as shown in Listing~\ref{lst:tlp-in-caq}. 
    In this representation, $Q_1$ serves as the original query template, while $Q_2$ is its three-way partitioned version. 
    Here, to ensure the syntax validity of the instantiated queries from the two CAQs, $\square_1$ is restricted to represent any table, and $\square_2$ must be a Boolean expression constructed using the columns of table $t_1$.
    These two CAQs are semantically equivalent for all valid instantiations of the placeholders that satisfy their constraints, where $\square_1$ and $\square_2$ should also be instantiated consistently in both queries.

\begin{figure}[t]
\vspace*{4ex}
    \centering
\begin{lstlisting}[label={lst:tlp-in-caq}, caption={An example of representing and instantiating TLP~\cite{rigger2020finding} oracle in CAQ.}, captionpos=b, escapeinside={(*}{*)},]
CREATE TABLE t1(c0 VARCHAR, ...);
CREATE TABLE t2(...);
SELECT * FROM t1, (*${\square_1}\triangleright\placeholder{\textsf{Table}(...)}$*);          -- (*\textcolor{gray}{$Q_1$}*) 
SELECT * FROM t1, (*${\square_1}\triangleright\placeholder{\textsf{Table}(...)}$*) 
WHERE ((*${\square_2}\triangleright\placeholder{\textsf{Expr}(t1:BOOLEAN)}$*) IS TRUE) UNION ALL 
SELECT * FROM t1, (*${\square_1}\triangleright\placeholder{\textsf{Table}(...)}$*) 
WHERE ((*${\square_2}\triangleright\placeholder{\textsf{Expr}(t1:BOOLEAN)}$*) IS FALSE) UNION ALL 
SELECT * FROM t1, (*${\square_1}\triangleright\placeholder{\textsf{Table}(...)}$*) 
WHERE ((*${\square_2}\triangleright\placeholder{\textsf{Expr}(t1:BOOLEAN)}$*) IS NULL);     -- (*\textcolor{gray}{$Q_2$}*) 
(*${\square_1}\triangleright\placeholder{\textsf{Table}(...)} \mapsto$*) (*\texttt{\textcolor{NavyBlue}{t1 ASOF JOIN t2}}*)
(*${\square_2}\triangleright\placeholder{\textsf{Expr}(t1:BOOLEAN)} \mapsto$*) (*\texttt{\textcolor{NavyBlue}{json\_valid(t1.c0)}}*)\end{lstlisting}
\vspace*{-4ex}
\end{figure}




    \vspace{-1ex}
    \paragraph{Test oracle generation, verification, and instantiation}
    To generate candidate oracles, \toolname{} prompts an LLM with an initial CAQ 
    and instructs it to produce a semantically equivalent one. 
    To address the \emph{soundness challenge}, each generated CAQ pair is then formally verified. 
    We employ SQLSolver~\cite{ding2023proving}, a state-of-the-art SQL equivalence prover, to confirm their equivalence. 
    If the prover successfully validates the equivalence, the CAQ pair is accepted as a valid test oracle; otherwise, it is discarded. 
    Since all existing provers are designed for concrete queries, they cannot directly handle CAQs. We bridge this gap by {\em instantiating the placeholders with virtual tables and columns}, thus transforming them into concrete queries.
    This verification step ensures that only provably correct oracles are used for testing.
    
    To tackle the \emph{scalability challenge}, each verified oracle is instantiated into thousands of concrete test cases. 
    Specifically, we populate the placeholders in each CAQ pair from a large corpus of pre-generated SQL snippets.
    This approach allows us to generate test cases for novel and complex features provided by the DBMS under test, even though they are not supported by our prover.
    For instance, in Listing~\ref{lst:tlp-in-caq}, the placeholders $\square_1$ and $\square_2$ can be instantiated with various table joins and Boolean expressions, respectively, such as \texttt{t1} \texttt{ASOF} \texttt{JOIN} \texttt{t2} and \texttt{json\_valid(t1.c0)}.
    These snippets can be generated offline by a hybrid approach combining an LLM to cover diverse database features and a high-throughput generator, such as SQLancer~\cite{rigger2020testing}, even though the prover currently cannot reason about them. 

    \vspace{-1ex}
    \paragraph{Evaluation}
    We have implemented \toolname and evaluated it on five widely used and comprehensively tested DBMSs: Dolt~\cite{dolt2025}, DuckDB~\cite{raasveldt2019duckdb}, MySQL~\cite{mysql2025}, PostgreSQL~\cite{momjian2001postgresql}, and TiDB~\cite{huang2020tidb}.
    \toolname discovered \bugsnumber unique and previously unknown bugs. 
    Of these, \confirmednumber have been confirmed and \fixednumber have been fixed by the developers. The bugs discovered include \logicnumber critical logic bugs that lead to incorrect query results, while the remaining \othernumber cause performance problems or crashes.
    In our empirical comparisons with state-of-the-art open source testing tools in DuckDB~\cite{raasveldt2019duckdb}, \toolname demonstrated an improvement of up to 1.19$\times$ in code coverage, and 6.43 $\times$ in metamorphic coverage~\cite{ba2025metamorphic}, a recently developed coverage indicator to assess the ability to find logic bugs.
    In addition, we also compared \toolname with a union of existing test oracles, rewriting them into equivalent CAQ pairs equipped with the same query generator.
    The results show that \toolname's new oracles can detect 3.33$\times$ more unique logic bugs than the sum of prior oracles from these works~\cite{rigger2020detecting,rigger2020finding,ba2024keep,jiang2024detecting} 
    within 6 hours of testing on Dolt~\cite{dolt2025}.
    Our ablation studies further validate the effectiveness of the SQL equivalence prover in ensuring soundness and the contribution of CAQ to enhancing scalability.
    \begin{figure}[tp!]
        \centering
        \includegraphics[width=1.0\linewidth]{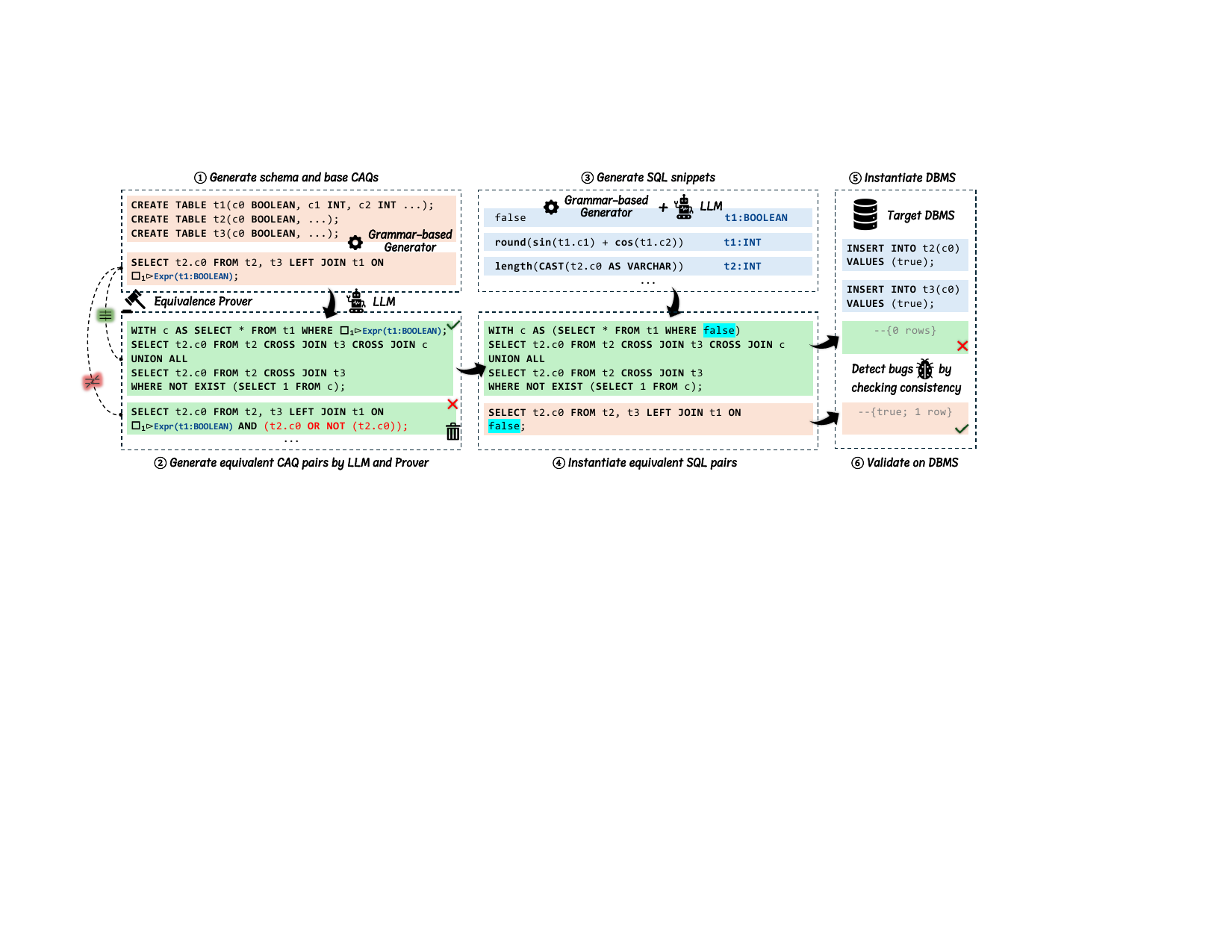}
        \caption{Overall pipeline of \toolname.}
        \label{fig:pipeline}
        \vspace{-2ex}
    \end{figure}
    This paper makes the following contributions.
    \begin{itemize}[leftmargin=*, itemsep=2pt]
        \item We introduce the concept of the \emph{equivalent CAQ pair} --- a novel and expressive representation for test oracles in DBMSs --- that can also capture features not supported by existing SQL equivalence provers. 
        Based on that, we propose a new methodology that leverages LLMs to automatically discover test oracles for DBMSs by using them to generate CAQs. To our knowledge, we are the first tool that utilizes LLMs to generate DBMS test oracles.

        \item We address the \emph{soundness challenge} by formally verifying the correctness of CAQ pairs, and address the \emph{scalability challenge} by pre-generating a large corpus of reusable SQL snippets, which can then be used to efficiently instantiate equivalent CAQ pairs into numerous concrete tests. 

        \item We perform an \emph{extensive evaluation} on five widely tested DBMSs, uncovering \bugsnumber previously unknown bugs, which outperforms the state-of-the-art techniques in several coverage metrics and the number of unique logic bugs found.
    \end{itemize}
\end{section}

\begin{section}{Background}

    \paragraph{Test oracle}
    A test oracle is a mechanism for verifying whether a system’s output is correct for a given input~\cite{barr2014oracle}. In the context of DBMS testing, test oracles typically operate by transforming a given SQL query into a semantically equivalent variant, as in TLP~\cite{rigger2020finding}, NoREC~\cite{rigger2020detecting}, and EET~\cite{jiang2024detecting}.
    By doing so, the oracle can compare the execution results of the original and transformed queries to detect potential logic bugs.
    In this work, CAQ provides a unified framework for LLMs to generate test oracles automatically.
    We show that most of these prior oracles can be formalized as instances of equivalent CAQ pairs, such as that shown in Listing~\ref{lst:tlp-in-caq}.
    We include more examples in Appendix~C.

    \vspace{-1.5ex}
    \paragraph{SQL equivalence verification}
    The goal of SQL equivalence verification is to determine if two SQL queries are semantically equivalent, meaning they produce the same result when executed on all possible input database instances.
    SQL equivalence is generally undecidable~\cite{trakhtenbrot1950impossibility}, but partial deciders exist and are widely used in various scenarios such as query rewriting~\cite{wang2022wetune,liu2022leveraging} and text-to-SQL~\cite{yang2025automated}.
    One line of work has focused on SQL equivalence provers that are sound (\emph{i.e.,} they only confirm true equivalences) but not necessarily complete (\emph{i.e.,} they may fail to identify all equivalent pairs)~\cite{wang2024qed,chu2017cosette,ding2023proving, chu2017hottsql}.
    On the other hand, there are also some works on SQL equivalence disprovers~\cite{zhao2025polygon,he2024verieql}, which can conservatively produce counterexamples for nonequivalent query pairs.

    \vspace{1ex}
    In this paper, we use LLMs to generate test oracles and use SQL provers to validate their equivalence. However, existing provers are limited, as they only support concrete SQL pairs and can reason about limited SQL features. To bridge this gap, we extend the provers to handle CAQs by introducing virtual tables and columns. 
    By doing so, our framework can support many more features by instantiating these placeholders in the verified test oracles.
\end{section}

\begin{section}{\toolname Overview}

    We now use the detection of a previously unknown logic bug in DuckDB~\cite{raasveldt2019duckdb} as a case study to provide an overview of the \toolname framework.

        \Cref{fig:pipeline} presents the overall pipeline of \toolname, which consists of two stages: \emph{Test Oracle Discovery} (\ding{172} and \ding{173}) and \emph{Test Cases Instantiation} (\ding{174} -- \ding{177}). The first stage aims to automatically discover a large number of high-quality test oracles in the form of equivalent CAQ pairs, while the second stage focuses on deriving concrete test cases from these CAQ pairs to detect bugs in DBMSs.

        In \ding{172}, we generate a set of CAQs with their associated schemas by a grammar-based generator without using LLMs, and this will be detailed in Section~\ref{sec:database-seeding}.
        Specifically, we use SQLancer++~\cite{zhong2025scaling}'s query generator to produce seed queries, but we do not use their predefined test oracles (such as TLP~\cite{rigger2020finding} and NoREC~\cite{rigger2020detecting}) for bug detection.
        For example, we generate the following CAQ:
         \begin{lstlisting}[frame=none,numbers=none, belowskip=-1.5ex, xleftmargin=\parindent,escapeinside={(*}{*)}]
SELECT t2.c0 FROM t2, t3 LEFT JOIN t1 
ON (*$\square_1 \triangleright\placeholder{\textsf{Expr}(t1:BOOLEAN)}$*);
        \end{lstlisting}
        After that, in \ding{173} these CAQs will serve as seed queries for the subsequent LLM-based test oracle discovery.
        First, we employ an iterative prompting strategy to guide LLMs to generate various equivalent CAQs for each seed query, which will be elaborated in Section~\ref{sec:caq-pairs-generation}.
        For instance, given the above CAQ as input, the LLM may generate the following CAQs:
        \begin{lstlisting}[frame=none,numbers=none, belowskip=-1.5ex, xleftmargin=\parindent,escapeinside={(*}{*)}]
(*$C_1$*): WITH c AS (SELECT * FROM t1 WHERE (*$\square_1 \triangleright\placeholder{\textsf{Expr}(t1:BOOLEAN)}$*))
SELECT t2.c0 FROM t2 CROSS JOIN t3 CROSS JOIN c
UNION ALL
SELECT t2.c0 FROM t2 CROSS JOIN t3
WHERE NOT EXIST (SELECT 1 FROM c);

(*$C_2$*): SELECT t2.c0 FROM t2, t3 LEFT JOIN t1 
ON (*$\square_1 \triangleright\placeholder{\textsf{Expr}(t1:BOOLEAN)}$*) AND (t2.c0 OR NOT (t2.c0));
        \end{lstlisting}
        Recall that LLM-generated CAQs are not guaranteed to be semantically equivalent due to the hallucinations. 
        The first candidate, \emph{i.e.,} $C_1$, is indeed a valid transformation. 
        Its equivalence holds because it correctly expands the \sqlkeyword{LEFT JOIN} clause into a \sqlkeyword{UNION ALL}, and then uses a Common Table Expression (CTE) to reposition the \sqlkeyword{WHERE} clause.
        However, the second candidate, \emph{i.e.,} $C_2$, is invalid since 
        as it fails to consider the corner case of \texttt{NULL} values in SQL's three-valued logic.

        To filter out such inequivalent candidates, we employ a SQL equivalence prover called SQLSolver~\cite{ding2023proving} for this purpose. 
        Specifically, we will discard any candidate that cannot be proven equivalent to the input query. The details about how to communicate with CAQ pairs and concrete SQL provers will be presented in Section~\ref{sec:equivalence-checking}.

        In the next stage, our task is to instantiate each verified CAQ pair into multiple concrete SQL query pairs by filling the placeholders in the CAQs.
        Before that, we generate a corpus of SQL snippets for placeholder filling, ensuring efficiency and cost-effectiveness during testing through reuse.
        This corpus also needs to be scalable, complex, and diverse enough to cover various database features. 
        
        In \ding{174}, we use a hybrid approach that integrates LLMs with a grammar-based generator, producing complex, feature-rich snippets (by LLMs) while maintaining scalability (by grammar-based generator). 
        Note that LLMs may not generate the snippets that return the specified datatypes, or even generate invalid SQL snippets that will cause false positives during testing, such as non-deterministic features.
        To address this, we employ runtime validation to filter snippets.
        This will be discussed in Section~\ref{sec:corpus-generation}.

        With the generated corpus, in \ding{175} we  instantiate each CAQ pair into multiple concrete SQL query pairs by filling the placeholders with snippets from the corpus.
        The details about how to reuse the corpus for instantiating multiple CAQ pairs associated with different schemas will be presented in Section~\ref{sec:query-instantiation}. 
        For example, for the above equivalent CAQ pair, we must fill the placeholder $\square_1 \triangleright\placeholder{\textsf{Expr}(t1:BOOLEAN)}$ with a snippet that produces a Boolean value from the table \texttt{t1}.
        For instance, the generated snippet \texttt{false} can be used to instantiate the CAQ pair into the following concrete SQL query pair:
        \begin{lstlisting}[frame=none,numbers=none, belowskip=-1.5ex, aboveskip=1ex,xleftmargin=\parindent,escapeinside={(*}{*)}]
(*$Q_1$:*) SELECT t2.c0 FROM t2, t3 LEFT JOIN t1 ON false;
(*$Q_2$:*) WITH c AS (SELECT * FROM t1 WHERE false) ...
        \end{lstlisting}
        
        Once we have the concrete equivalent SQL query pairs derived by LLM-generated test oracles, \emph{e.g.,} $Q_1 \equiv Q_2$ in the above example, in steps \ding{176} and \ding{177} we can execute them on the target DBMS to check for bugs.
        For example, when we insert two rows \texttt{t2(c0:true)} and \texttt{t3(c0:true)} into an empty database, respectively, both queries should return one row. 
        However, in DuckDB v1.4.0, $Q_1$ returns one row while $Q_2$ returns zero rows, indicating a logic bug in DuckDB.
        The root cause is that DuckDB incorrectly assumes that an empty materialized CTE will always cause the outer query to return no rows,
        which does not hold when there is a \sqlkeyword{UNION ALL} operator.
        The details of database instantiation and bug reporting will be discussed in \Cref{sec:database-instantiation}.

\end{section}

\begin{section}{Constrained Abstract Query}

        In this section, we formally introduce \emph{Constrained Abstract Query (CAQ)}, the core concept of \toolname.

        A CAQ (Constrained Abstract Query) is a parameterized SQL query template that contains placeholders to be instantiated with concrete SQL fragments, together with constraints specifying valid instantiations for each placeholder.
        \Cref{fig:bnf-grammar} formalizes the structure of CAQ pairs using a BNF grammar.
        A CAQ pair consists of two CAQs and a shared schema that defines all referenced tables.
        Each CAQ is represented as a tuple $(\mathsf{Query}, \mathsf{PlaceholderMap})$, where 
        $\mathsf{Query}$ is a \texttt{SQLSelectStmt} containing placeholders, 
        and $\mathsf{PlaceholderMap}$ is a set that defines a constraint for each placeholder.
        Each placeholder $\square_i$ is constrained to be either an $\mathsf{Expr}$ or a $\mathsf{Table}$.
        An $\mathsf{Expr}$ placeholder denotes an expression whose scope is bound to a specific source table and whose return type is specified (e.g., a \texttt{BOOLEAN} column reference, an expression that returns an \texttt{INT} or \texttt{TEXT}).
        A $\mathsf{Table}$ placeholder, on the other hand, denotes a table or a subquery, and its constraint specifies the expected schema (column names and datatypes) of the subquery result.
        $\mathsf{Table}$ placeholders may appear wherever an SQL table expression is valid, 
        such as in a \sqlkeyword{FROM} clause or in nested subqueries,
        while $\mathsf{Expr}$ placeholders may appear wherever an expression can be evaluated 
        within the scope of their associated table, such as in the \texttt{SELECT} clause or filter conditions.
        The two types of placeholders are designed to capture complementary SQL features:
        $\mathsf{Expr}$ placeholders model expressions involving functions and operators,
        whereas $\mathsf{Table}$ placeholders model compositional constructs involving subqueries, 
        \sqlkeyword{JOIN}, \sqlkeyword{EXISTS}, and \sqlkeyword{IN}.

        We now define how we instantiate a CAQ and how to check equivalence between a CAQ pair. 

        \begin{figure}[t]
\centering

\[
\begin{aligned}
\mathsf{CAQ\ pair}  ::=&\  \mathsf{Schema},\ \mathsf{CAQ},\ \mathsf{CAQ} \\
\mathsf{Schema} ::=&\  \texttt{SQLTableDef} \mid (\texttt{SQLTableDef},\ \mathsf{Schema}) \\
\mathsf{CAQ}  ::=&\  (\mathsf{Query},\ \mathsf{PlaceholderMap}) \\
\mathsf{Query} ::=&\  \texttt{SQLSelectStmt}[\square_1, \square_2, \ldots \ ] \\
\mathsf{PlaceholderMap} ::=& \{ \square_1\ \triangleright \mathsf{Constraint}, \square_2\ \triangleright \mathsf{Constraint}, \ldots\ \} \\
\mathsf{Constraint} ::=&\ \textcolor{NavyBlue}{\mathsf{Expr}(\mathsf{TableName}:\texttt{SQLDatatype})} \mid \\
&\ \textcolor{NavyBlue}{\mathsf{Table}(\texttt{SQLTableDef})}
\end{aligned}
\]



	\vspace{-0.2in}
    \caption{BNF Grammar for CAQ Pairs, where we omit the full SQL grammar for simplicity.}
	\label{fig:bnf-grammar}
    \vspace{-.1in}
\end{figure}

	\begin{definition}[CAQ Instantiation]
		Given a $\mathsf{CAQ}$ $q$ and its associated $\mathsf{PlaceholderMap}$ $\{\square_1 \triangleright c_1, \square_2 \triangleright c_2, \ldots, \square_k \triangleright c_k\}$, an \emph{instantiation} of $q$, denoted $q^*$, is obtained by replacing each placeholder $\square_i$ in $q$ with a concrete SQL snippet $s_i$ that satisfies the corresponding constraint $c_i$ and a set of general constraints $\mathcal{C}$. The substitution can be written as:
		\[
			q^{*} = q[\square_1 \mapsto s_1, \square_2 \mapsto s_2, \ldots, \square_k \mapsto s_k]
		\]
	\end{definition}
        
        Specifically, each $c_i$ refers to the \textsf{Constraint}s shown in \Cref{fig:bnf-grammar}.
        In addition, to prevent false positive bug reports when using the instantiated query $q^*$ for testing, we must also satisfy a set of general constraints $\mathcal{C}$, such as avoiding non-deterministic features, \emph{e.g.,} \sqlkeyword{RANDOM} and \sqlkeyword{CURRENT\_TIMESTAMP}, and ensuring the snippet is valid SQL.
        Note that these are also widely adopted constraints in prior work on manually designed test oracles~\cite{rigger2020finding,jiang2024detecting,zhang2025constant,ba2023testing}. 
        The details of $\mathcal{C}$ will be discussed in \Cref{sec:query-instantiation}.

        We say that two CAQs are equivalent if all their possible instantiations are semantically equivalent.

        \begin{definition}[Equivalent CAQ Pair]
                Given two $\mathsf{CAQ}$s $q_1$ and $q_2$ defined over the same $\mathsf{Schema}$ $s$ (which specifies the tables referenced in both queries) with the same $\mathsf{PlaceholderMap}$ $\{\square_1 \triangleright c_1, \square_2 \triangleright c_2, \ldots, \square_k \triangleright c_k\}$, we say that $q_1$ and $q_2$ form an \emph{Equivalent CAQ Pair}, denoted $(s, q_1, q_2)$, if for every possible instantiated concrete queries (Definition 4.1), $q_1^*$ and $q_2^*$, we have $q_1^* \equiv q_2^*$, \emph{i.e.,} they will return the same result set when executed on any database instance conforming to schema $s$.
        \end{definition}

        According to the above definition, once we verify that two CAQs form an equivalent pair, we can derive multiple concrete equivalent SQL query pairs by instantiating the CAQs with various snippets that satisfy the constraints of the placeholders.
        This allows us to decompose the test oracle discovery problem to CAQ pair generation and equivalence checking problems. 

\end{section}
\begin{section}{Test Oracle Discovery}
\label{sec:test-oracle-discovery}

   \begin{algorithm}[t]
        \caption{\small Discovering New Test Oracles}\label{alg:test-oracle-discovery}
        \small
        \KwIn{
            A DBMS $\mathcal{D}$, an LLM $\mathcal{M}$, a prover $\mathcal{P}$, a grammar-based generator $\mathcal{G}$, the number of schemas $N$, the number of maximum trials for each seed query $\mathsf{MaxDepth}$.
        }
        \KwOut{A set of test oracles $\mathcal{O}$.}
        \SetKwFunction{FMain}{$\mathsf{GenerateEqual}$}
        \SetKwProg{Fn}{Function}{:}{}
        \Fn{\FMain{$\mathcal{M}, \mathcal{D}, \mathcal{P}, q$, $\mathsf{StopThreshold} \leftarrow 3$}}{\label{line:func-generate-equal}
            $\mathsf{Equal} \leftarrow \{\}$, $\mathsf{Fail} \leftarrow \{\}$, $\mathsf{FailCount} \leftarrow 0$, $\mathsf{Count} \leftarrow 0$; \\
            \While {$\mathsf{FailCount} < \mathsf{StopThreshold}$} { \label{line:while-loop}
                $q' \leftarrow \mathcal{M}(q, \mathsf{Sample(Equal)}, \mathsf{Sample(Fail)})$; \label{line:prompt-llm}\\
                \If{$\mathcal{P} \vdash (q \equiv q') \texttt{ and } \mathcal{D} \models q'$ \tcp{\Cref{sec:equivalence-checking}} \label{line:equivalence-check}}{ 
                    $\mathsf{Equal} \leftarrow \mathsf{Equal} \cup \{q'\}$; \label{line:add-to-equal}\\
                    $\mathsf{FailCount} \leftarrow 0$; \\
                    $\mathsf{StopThreshold} \leftarrow \mathsf{StopThreshold} + 1$; \label{line:increase-threshold}
                }
                \Else{
                    $\mathsf{Fail} \leftarrow \mathsf{Fail} \cup \{q'\}$; \label{line:add-to-fail} \\
                    $\mathsf{FailCount} \leftarrow \mathsf{FailCount} + 1$; \label{line:increase-fail-count} \\
                }
                $\mathsf{Count} \leftarrow \mathsf{Count} + 1$; \\
                \If{$\mathsf{Count} \geq \mathsf{MaxDepth}$ \label{line:check-max-depth}}{
                    \textbf{break};
                }
            }
            \Return $\mathsf{Equal}$; \label{line:return-oracles}
        }

        $\mathcal{O} \leftarrow \emptyset$\;
        \For{$i \leftarrow 1$ \KwTo $N$ \label{line:for-loop}}{
            $s \leftarrow \mathsf{GenerateSchema}(\mathcal{G}, \mathcal{D})$ \tcp*{\Cref{sec:database-seeding}} \label{line:generate-schema}
            $q \leftarrow \mathsf{GenerateCAQ}(\mathcal{G}, \mathcal{D}, s)$ \tcp*{\Cref{sec:database-seeding}} \label{line:generate-caq}
            $\{q_1, q_2, \ldots, q_k\} \leftarrow \mathsf{GenerateEqual}(\mathcal{M}, \mathcal{D}, \mathcal{P},q)$; \label{line:generate-equal}
            \\
            $\mathcal{O} \leftarrow \mathcal{O} \cup \{(s, q, q_1), (s, q, q_2), \ldots, (s, q, q_k)\}$;
        }
        \Return $\mathcal{O}$; 
    \end{algorithm}
    We now present our test oracle discovery algorithm, outlined in Algorithm~\ref{alg:test-oracle-discovery} and consists of three phases: database seeding, CAQ pairs generation, and equivalence checking. It leverages four key components as input: a target DBMS $\mathcal{D}$ (\emph{e.g.,} DuckDB~\cite{raasveldt2019duckdb}), an LLM $\mathcal{M}$ (\emph{e.g.,} GPT o4-mini~\cite{openai-o4-mini}), a SQL prover $\mathcal{P}$ (\emph{e.g.,} SQLSolver~\cite{ding2023proving}), and a grammar-based generator $\mathcal{G}$ (\emph{e.g.,} SQLancer~\cite{ba2023testing}).

\setstcolor{black}

    \begin{subsection}{Database Seeding}
        \label{sec:database-seeding}
        To discover diverse test oracles, we generate CAQ pairs with multiple database schemas. 
        As equivalent CAQ pairs are associated with their database schemas, we first iterate $N$ times to generate diverse schemas and their corresponding CAQs (line~\ref{line:for-loop}).
        In each iteration, we start by database seeding, which involves generating a random database schema (line~\ref{line:generate-schema}), and then producing a CAQ that conforms to the generated schema, \emph{i.e.,} the seed query $q$ (line~\ref{line:generate-caq}).
        Though existing grammar-based generators, \emph{e.g.,} SQLancer can be directly used for schema and concrete query generation, they cannot directly produce queries with placeholders. 
        The key challenge lies in producing those \modifyb{entries} in the seed CAQ while ensuring the syntactic correctness after instantiating them.
        To do so, we ask the generator to produce additional columns and tables in the schema, and then use those \emph{virtual columns} and \emph{virtual tables} to represent the two types of \modifyb{placeholders}, respectively, \emph{i.e.,} $\mathsf{Expression}$ and $\mathsf{Table}$ shown in \Cref{fig:bnf-grammar}. 
        As shown in Listing~\ref{lst:virtual-column-and-table}, we create a virtual column \texttt{placeholder1} to represent an expression, \emph{i.e.,} $\square_1 \triangleright\placeholder{\textsf{Expr}(t1:BOOLEAN)}$, and a virtual table \texttt{vtable1} for a table placeholder \emph{i.e.,} $\square_1 \triangleright\placeholder{\textsf{Table}(t0:INT, t1:INT)}$. 
        The generator, unaware of this abstraction, directly uses the concrete \texttt{placeholder1} and \texttt{vtable1}, effectively producing a valid CAQ.
        This approach makes our CAQ representation compatible with SQL provers expecting concrete syntax and easy for LLMs to understand.        

        \begin{figure}[t!]
        \centering
        \begin{lstlisting}[label={lst:virtual-column-and-table}, caption={An example of using concrete SQL query to represent CAQ when interfacing with grammar-based generators and provers.}, captionpos=b, escapeinside={(*}{*)},]
CREATE TABLE t1(c0 INT, (*\textcolor{black}{\texttt{placeholder1 \textbf{BOOLEAN}}}*));
CREATE TABLE t2(c0 INT);
CREATE TABLE (*\textcolor{black}{vtable1(c0 \textbf{INT}, c1 \textbf{INT})}*);
SELECT (*\textcolor{NavyBlue}{t1.placeholder1}*) FROM t1, (*\textcolor{NavyBlue}{vtable1}*);
\end{lstlisting}
        \vspace*{-4ex}
        \end{figure}

    \end{subsection}
    \begin{subsection}{CAQ Pairs Generation}
        \label{sec:caq-pairs-generation}
        After obtaining a seed CAQ $q$ and its schema $s$, we proceed to generate a set of equivalent CAQs $\{q_1, q_2, \ldots, q_k\}$. This is done in $\mathsf{GenerateEqual}$ (line~\ref{line:func-generate-equal} -- \ref{line:return-oracles}), which iteratively prompts an LLM $\mathcal{M}$ to produce novel variants of $q$. 
        \modifya{
            The key motivation behind using LLM here is to leverage its ability to understand the query semantics and guide the equivalent CAQ generation, while traditional grammar-based generators like SQLsmith~\cite{Seltenreich2022sqlsmith} are typically not semantics-aware and thus struggle to generate equivalent queries.
        }
        To generate variants efficiently, 
        our key strategy is to leverage in-context learning by providing the LLM with not only the seed query $q$, but also carefully selected samples from two dynamically updated sets: $\mathsf{Equal}$, \emph{i.e.,} queries previously verified as equivalent to $q$, and $\mathsf{Fail}$, \emph{i.e.,} those that failed verification. The $\mathsf{Equal}$ set promotes both correctness and diversity, while the $\mathsf{Fail}$ set provides examples of failures to avoid.

    A crucial aspect of our approach is how we guide the LLM towards generating diverse CAQs to trigger different optimization paths and detect potential bugs. 
    Instead of pursuing syntactic differences, our goal is to produce queries with varied \emph{query plans}, as this is a key principle to creating effective test oracles in previous works~\cite{ba2023testing,zhang2025constant, ba2024keep}. Qualitatively, we prompt the LLM to generate novel queries that are different from the provided examples \modify{(line \ref{line:prompt-llm})}. 
    In addition, we also quantitatively measure the similarity between query plans using the tree edit distance~\cite{zhang1989simple}.
    This metric is then used to guide the selection of examples for the next iteration's prompt.
    Specifically, we use k-means to cluster the CAQs in the $\mathsf{Equal}$ set based on the tree edit distance value and then sample queries from each cluster.
    This strategy ensures the LLM is consistently shown a diverse range of successful query structures, pushing it to explore novel execution plans.
    We defer the details of the clustering and sampling algorithms to Appendix~A.
    In summary, we can formalize the overall objective of the LLM's generation task as follows:
    \begin{prob}[Equivalent CAQ Generation]
        Given a DBMS $\mathcal{D}$, a schema $S$, a seed CAQ $q$, and a set $\mathsf{Equal}$ of previously verified CAQs equivalent to $q$, generate a new CAQ $q'$ such that (1) $q' \equiv q$ and (2) the difference between $q'$'s query plan and the plans of queries in $\mathsf{Equal}$ is maximized.
    \end{prob}

    After the LLM generates a candidate query $q'$, we employ a SQL equivalence decider $\mathcal{P}$ to formally verify its equivalence to the seed query $q$, \emph{i.e.,} $\mathcal{P} \vdash q \equiv q'$ (line~\ref{line:equivalence-check}). Additionally, we execute $q'$ on the DBMS $\mathcal{D}$ to ensure it is syntactically correct and runs without errors, \emph{i.e.,} $\mathcal{D} \models q'$. If both checks pass, $q'$ is added to the $\mathsf{Equal}$ set (line~\ref{line:add-to-equal} --- \ref{line:increase-threshold}); otherwise, it goes into the $\mathsf{Fail}$ set (line~\ref{line:add-to-fail} --- \ref{line:increase-fail-count}). This iterative process continues until a stopping criterion is met, such as reaching a maximum number of failed attempts (line~\ref{line:while-loop}) or hitting a predefined depth limit for iteration (line~\ref{line:check-max-depth}).

    \end{subsection}
    \begin{subsection}{Equivalence Checking}
        \label{sec:equivalence-checking}
        The CAQs generated by LLM are not guaranteed to be semantically equivalent. Hence, we use a SQL prover $\mathcal{P}$ to formally verify the equivalence between the seed CAQ $q$ and a generated candidate CAQ $q'$. 
        Since CAQs contain placeholders, they cannot be directly verified by SQL provers. Therefore, we substitute the placeholders with virtual columns and tables (as shown in Listing~\ref{lst:virtual-column-and-table}). We then pass the schema with these virtual entities to the prover and determine equivalence between the two concrete queries. This works because the virtual columns and tables behave like any concrete SQL entities of the same type during equivalence checking.
        If the prover can formally prove that $q \equiv q'$, we say the two CAQs and their schema $s$ form a valid test oracle, denoted as the triplet $(S, q, q')$, which can be used to test the target DBMS $\mathcal{D}$ in the following process.
        We leverage the prover as a sound verification tool and deliberately retain only those CAQ pairs that are formally verified. This design choice prioritizes reliability, ensuring that our resulting test oracles are free from false positives and can be used confidently in the subsequent testing phase. 
        
        Another limitation of these provers is that they typically support only a subset of SQL features. To mitigate this, we first prove the equivalence between two CAQs, which typically contain fewer SQL features than concrete queries. After that, we can instantiate those $\mathsf{PlaceholderEntry}$ in the CAQs with more complex and feature-rich expressions and tables to create concrete queries for testing. This highlights the effectiveness of \toolname's two-stage design, improving efficiency while also enriching the set of supported SQL features in the final test cases.
    \end{subsection}

\end{section}
\begin{section}{Test Cases Instantiation}
    \label{sec:test-case-derivation}
        We now describe our approach to instantiate test cases from the test oracles discovered in \Cref{sec:test-oracle-discovery}. As illustrated in \Cref{alg:test-cases-instantiation}, our approach consists of: (1) synthesizing a corpus of SQL snippets, (2) instantiating test queries by populating placeholders in the test oracles with those generated in the first step, and (3) creating database instances to execute the instantiated test queries to detect bugs. 
        \begin{algorithm}[t]
        \caption{\small Test Cases Instantiation Given Oracles}\label{alg:test-cases-instantiation}
        \small
        \KwIn{
            A DBMS $\mathcal{D}$, an LLM $\mathcal{M}$, a grammar-based generator $\mathcal{G}$, a set of test oracles $\mathcal{O}$ generated in \Cref{sec:test-oracle-discovery}, and the number of test cases to be instantiated per test oracle $K$. 
        }
        \KwOut{A set of bug reports $\mathcal{B}$.}
        \SetKwFunction{FMain}{$\mathsf{InstantiateTestCase}$}
        \SetKwProg{Fn}{Function}{:}{}
        \Fn{\FMain{$o = (s, q, q'), \mathcal{L}$} \label{line:instantiate-test-case}}{
            $q^* \leftarrow q$, $q'^* \leftarrow q'$; \\
            \ForEach{$\mathsf{Placeholder}$ $e$ $\mathsf{in}$ $q$ $\mathsf{or}$ $q'$}{ \label{line:for-each-placeholder}
                \eIf{$e:\mathsf{Expression}$}{
                    $t \leftarrow \mathsf{TableName}(e)$, $d \leftarrow \mathsf{Datatype}(e)$; \\
                    $e^* \leftarrow \mathsf{SampleExpression}(\mathcal{L}, \mathsf{type} = d)$; \label{line:sample-expression} \\
                    \ForEach{$\mathsf{col} \in \mathsf{ColumnNames}(e^*)$}{ \label{line:for-each-col}
                        $d^* \leftarrow \mathsf{Datatype}(\mathsf{col})$; \\ 
                        $e^* \leftarrow e^*[\mathsf{col} \mapsto \mathsf{SampleColumn}( t, \mathsf{type}=d^*)]$; \label{line:replace-col}\\

                    }
                }{
                    $s^* \leftarrow \mathsf{Schema}(e)$; \\
                    $e^* \leftarrow \mathsf{SampleTable}(\mathcal{L}, \mathsf{source} = s, \mathsf{target} = s^*)$; \label{line:sample-table}\\
                }
                $q^* \leftarrow q^*[e \mapsto e^*]$, $q'^* \leftarrow q'^*[e \mapsto e^*]$; \label{line:replace-queries}\\
            }
            \Return $(s, q^*, q'^*)$; \label{line:return-instantiated-test-case}
        }
        $\mathcal{L} \leftarrow \mathsf{GenerateCorpus}(\mathcal{D}, \mathcal{M}, \mathcal{G})$ \tcp*{\Cref{sec:corpus-generation}} \label{line:corpus-generation}
        $\mathcal{B} \leftarrow \emptyset$; \\
        \ForEach{$o = (s, q, q') \in \mathcal{O}$}{
        \For{$i \leftarrow 1$ \KwTo $K$}{
            $(s, q^*, q'^*) \leftarrow \mathsf{InstantiateTestCase}(o, \mathcal{D})$; \\
            $\mathsf{CreateDatabaseInstance}(s, \mathcal{D})$ \tcp*{\Cref{sec:database-instantiation}} \label{line:database-instantiation}
            \If{$\mathsf{Execute}(q^*, \mathcal{D}) \neq \mathsf{Execute}(q'^*, \mathcal{D})$}{ \label{line:db-execution}
                $\mathcal{B} \leftarrow \mathcal{B} \cup \{(s, \mathcal{D}\mathsf{.data},q^*, q'^*)\}$ \tcp*{\Cref{sec:database-instantiation}} 
            }
        }
        }
        \Return $\mathcal{B}$;

    \end{algorithm}

    \begin{subsection}{Corpus Synthesis}
        \label{sec:corpus-generation}

        The test case instantiation process begins with the synthesis of a SQL snippet corpus. This corpus provides a pool of expressions and tables used to populate the placeholders within our test oracles (line \ref{line:corpus-generation}). We pre-generate this corpus rather than querying the LLM for each placeholder individually to amortize the cost of LLM invocations and improve efficiency, as many snippets can be reused across multiple test oracles. 
        To generate an effective and diverse corpus, we employ a hybrid strategy that leverages both an LLM $\mathcal{M}$ 
        and a grammar-based generator $\mathcal{G}$. We use this hybrid approach since: (1) LLMs excel at generating complex query structures and diverse database features, while (2) grammar-based generators can systematically cover a wide range of corner values and edge cases. As the grammar-based method follows established techniques, we focus this section on our novel LLM-based approach.

        Our LLM-based method starts with a default schema, which contains a single table incorporating all data types supported by the target DBMS:
        \begin{lstlisting}[frame=none,numbers=none, belowskip=-1.5ex, aboveskip=1ex,xleftmargin=\parindent,escapeinside={(*}{*)}]
CREATE TABLE t(c1 INT, c2 BOOLEAN, c3 TEXT ...);
        \end{lstlisting}
        Using this schema, we prompt $\mathcal{M}$ to generate SQL expressions resulting in diverse datatypes from the default table. For example, we generate expressions such as \texttt{(c1 + c2)} : \texttt{INT}, \texttt{(c2 AND c3 IS NOT NULL)} : \texttt{BOOLEAN}, and \texttt{CONCAT(c3, "test")} : \texttt{TEXT}. The diversity comes from combining multiple columns with various SQL operators and functions, not just selecting individual columns. 
        These expression snippets can be reused across different test oracles by substituting the table and column names properly, which we will discuss in \Cref{sec:query-instantiation}.
        Furthermore, to guide $\mathcal{M}$ with the target DBMS $\mathcal{D}$'s features, we employ a simple yet effective \emph{documentation-augmented generation} 
        method:
        We first collect the official documentation of the target DBMS, \emph{e.g.}, function reference pages and operator descriptions, and partition it into smaller pieces, where each piece describes a specific feature, such as an aggregate function or a string operator. 
        Then we sample a small set of feature documentation and prompt $\mathcal{M}$ to generate SQL snippets with the sampled features.
        This approach ensures that the generated SQL snippets are diverse enough to cover a wide range of features.

    Note that, for table snippets, we can only reuse them in test oracles that share the same schema. 
        This is because table snippets must match the exact schema of the source tables they reference---including the specific table names, column names, and their datatypes. For instance, a table snippet that performs \texttt{JOIN t1}, \texttt{t2} can only be applied to test oracles whose schema contains both tables \texttt{t1} and \texttt{t2} with compatible structures.
        Nevertheless, we can still leverage $\mathcal{M}$ to generate table snippets for each schema without much overhead, as the number of unique schemas is usually small in practice.

    After the LLM-generated snippets, we need to validate whether they produce the expected datatype (\emph{i.e.,} for expressions) or schema (\emph{i.e.,} for tables) that we desire.
        To this end, we use runtime validation by executing the generated snippets on the target DBMS $\mathcal{D}$ and checking their actual output types or schemas.
    For example, as shown in Listing~\ref{lst:type-checking}, we can directly run those \texttt{SELECT} statements and then check the output column types or table schema. 
        \begin{figure}[t!]
        \begin{lstlisting}[label={lst:type-checking}, caption={An example of runtime validating LLM-generated snippets in DuckDB~\cite{raasveldt2019duckdb}.}, captionpos=b, escapeinside={(*}{*)},]
SELECT {(*\textcolor{NavyBlue}{\textsf{GeneratedExpr}}*)} FROM t; -- datatype
SELECT * FROM {(*\textcolor{NavyBlue}{\textsf{GeneratedTable}}*)}; -- schema \end{lstlisting}
        \vspace{-2ex}
        \end{figure}
        If the execution fails, we discard the snippet; otherwise, we record its actual output type or schema for later use in \Cref{sec:query-instantiation}.

    Finally, to further generate more complex expressions, we introduce a technique we term \emph{cross-combination}. The key idea is to substitute a column reference in one expression with another expression, provided that their datatypes are compatible. 
    For instance, as shown in Listing~\ref{lst:cross-combining}, consider a Boolean 
    expression \texttt{t.c1 + t.c1 > 100} and an integer expression \texttt{bit\_count(t.c2)}, where \texttt{t.c2} is a Boolean column. Because the first expression evaluates to a Boolean, it can replace \texttt{t.c2} in the second one. This substitution yields a new, valid composite expression: \texttt{bit\_count(t.c1 + t.c1 > 100)}. This recursive process allows us to build intricate expressions from simpler, generated snippets, which enriches our corpus and combines the strengths of LLM and grammar-based methods.
        
        \begin{figure}[t!]
        \centering
        \begin{lstlisting}[label={lst:cross-combining}, caption={An example of cross-combining expressions.}, captionpos=b, escapeinside={(*}{*)},]
(*$\mathsf{Expression}\ 1$*) =  (t.c1 + t.c1 > 100) : BOOLEAN
(*$\mathsf{Expression}\ 2$*) = bit_count(t.c2) : INT
(*$\Rightarrow$*) bit_count(t.c1 + t.c1 > 100) : INT
\end{lstlisting}
        \vspace*{-4ex}
        \end{figure}

    \end{subsection}

    \begin{subsection}{Query Instantiation}
        \label{sec:query-instantiation}
        After synthesizing a corpus of SQL snippets, we instantiate the test queries by populating the placeholders in the test oracles with concrete expressions and tables from the generated corpus.
        We detail this process in \Cref{alg:test-cases-instantiation}:
        given a test oracle $o = (s, q, q')$, we iterate through each placeholder $e$ in both queries $q$ and $q'$ (line~\ref{line:for-each-placeholder}). 
        Depending on whether the placeholder represents an expression or a table, we handle them differently.
    For an expression placeholder, we first determine its expected datatype $d$ and the table $t$ that the placeholder is defined over (from the CAQ's placeholder map). 
        Then, we randomly sample an expression $e^*$ from the corpus $\mathcal{L}$ that matches the expected datatype $d$ (line~\ref{line:sample-expression}).
        Since the columns in the sampled expression $e^*$ refer to the default table, we need to replace them with columns from the target table $t$ that have compatible datatypes.
        We achieve this by iterating through each column in $e^*$ (line~\ref{line:for-each-col}), determining its datatype, and substituting it with a randomly sampled column from table $t$ that matches the datatype (line~\ref{line:replace-col}).

        For a table placeholder, we first extract its expected schema $s^*$.
        Then, we randomly sample a table $e^*$ from the corpus $\mathcal{L}$ that matches the expected schema $s^*$ and is derived from the source schema $s$ (line~\ref{line:sample-table}).
    Specifically, there are three ways to replace a table placeholder: \emph{i.e.,} direct replacement with a table name, using a Common Table Expression (CTE), or \texttt{CREATE VIEW}. We randomly choose one of the three methods. 
        Finally, we replace the placeholder $e$ in both queries $q$ and $q'$ with the instantiated expression or table $e^*$ (line~\ref{line:return-instantiated-test-case}).
        By following this procedure, we ensure that the instantiated queries $q^*$ and $q'^*$ are syntactically valid and semantically equivalent, ready for execution on the target DBMS.

        Recall that, in \Cref{sec:equivalence-checking}, we prove the equivalence of the abstract query pairs $(q, q')$ 
        under the semantics of virtual columns and tables.
        However, this equivalence is only guaranteed when those virtual columns and tables exist; this may not hold when the queries are instantiated with concrete expressions and tables that introduce semantic misalignments.
        Therefore, to ensure that the instantiated queries $q^*$ and $q'^*$ remain equivalent, we enforce the following general constraints $\mathcal{C}$ (defined in Definition 4.2) on the sampled query snippets.
        \begin{enumerate}[leftmargin=*, itemsep=2pt]
            \item \textbf{Determinism} ($\mathsf{Table}$ and $\mathsf{Expression}$): The snippet should not contain any non-deterministic functions, such as \texttt{RANDOM()}.
            \item \textbf{Null-preserving} ($\mathsf{Expression}$): The expression should preserve a null return value when evaluated on rows only containing null values. For example, \texttt{c1 + c2} is null-preserving, while \texttt{IFNULL(c1, 0)} is not.
            \item \textbf{Empty Results-preserving} ($\mathsf{Expression}$): The expression should return an empty result set when evaluated on an empty table. For example, \texttt{sum(c1)} is not empty-preserving.
        \end{enumerate}
        The first constraint is also widely used in previous works~\cite{rigger2020detecting,rigger2020testing,rigger2020finding} and can be checked by applying a regular expression to identify non-deterministic features.
        \modifyc{The second and third constraints resolve the semantic misalignment that arises when virtual columns and tables are instantiated with concrete expressions and tables.
        These constraints are necessary because, in the presence of outer joins, virtual columns may no longer faithfully capture the semantics of concrete expressions after instantiation. \emph{e.g.,} outer joins can introduce \texttt{NULL} values into virtual columns, causing instantiated expressions to be evaluated differently from the version fed to the SQL equivalence prover.

        After satisfying these constraints, we can ensure that the instantiated queries $q^*$ and $q'^*$ remain equivalent, proven by contradiction:
            Intuitively, if the two queries after instantiation were not semantically equivalent, then there would exist a concrete database instance on which they produce different results.
    Using this instance, we can populate the virtual tables and columns in the prover’s schema with the corresponding values.
    After that, we obtain a database instance under the schema used in the prover where the two queries return different results, which contradicts the original semantic equivalence.
        A detailed analysis of the counterexamples, together with the full proof for the cases satisfying them, is provided in Appendix~B.
        }

        Furthermore, we validate these constraints at runtime: we execute each snippet on $\mathcal{D}$ and check the returned types or schemas and behavior on \texttt{NULL} or empty inputs. 

\end{subsection}
    
    \begin{subsection}{Database Instantiation and Bug Reporting}
        \label{sec:database-instantiation}
        After instantiating the test queries, we create the database instances to execute them (\modifyb{line~\ref{line:database-instantiation} and \ref{line:db-execution}}).
        We basically follow the widely-used random data generation method~\cite{rigger2020detecting,rigger2020testing,rigger2020finding} to populate the tables in the instantiated schema $s$ with random tuples, and then create random indices on those tables to diversify the execution plans.

        Finally, we execute the instantiated queries $q^*$ and $q'^*$ on the target DBMS $\mathcal{D}$ and compare their results (line~\ref{line:database-instantiation}).
        If the results differ, we report a logic bug along with the instantiated test case, database schema, and table data.
        In addition, if a query causes a crash of $\mathcal{D}$, we also report a crash bug.
        Note that, though our test oracles can also detect performance issues similar to~\cite{jung2019apollo, liu2022automatic}, we do not detect them in a large scale due to those issues are typically regarded as expected behaviors by developers~\cite{ba2024cert}.

    \end{subsection}
\end{section}
\section{Evaluation}

    \setstcolor{black}
    \begin{table}[t!]
        \resizebox{0.85\linewidth}{!}{%
        \begin{tabular}{l l r r r}
        \hline
        \multirow{2}{*}{DBMS} & \multirow{2}{*}{Tested by} & \multirow{2}{*}{GitHub stars} & \multirow{2}{*}{Released} & \multirow{2}{*}{LOC} \\
                              &                            &                              &                           &                       \\ \hline
        Dolt       & \cite{zhong2025scaling}    & 19.1k & 2018 & 380k   \\
        DuckDB     & \cite{zhong2025scaling, rigger2020finding, zhang2025constant, fu2024sedar, fu2023griffin} & 32.7k & 2019 & 1,496k \\
        MySQL      & \cite{zhong2025scaling, rigger2020testing, song2025detecting, jiang2024detecting, zhang2025constant, song2024detecting, liang2022detecting, zhong2020squirrel, tang2023detecting, tang2025unveiling} & 11.7k & 1995 & 5,532k \\
        PostgreSQL & \cite{zhong2025scaling, rigger2020detecting, rigger2020testing, song2025detecting, jiang2024detecting, liang2023sequence, fu2023griffin, zhong2020squirrel} & 18.5k & 1995 & 938k   \\
        TiDB       & \cite{zhong2025scaling, ba2023testing, jiang2024detecting, zhang2025constant, song2024detecting, song2025detecting, tang2025unveiling, tang2023detecting} & 39.0k & 2016 & 1,398k \\ \hline
        \end{tabular}}
        \vspace{2ex}
        \caption{DBMSs under test by \toolname.}
        \vspace{-4ex}
        \label{tab:target-dbms}
    \end{table}

    \label{sec:evaluation}
    In this section, we evaluate \toolname on five aspects:
    \begin{enumerate}[leftmargin=*, itemsep=1pt]
    \item How many new bugs can \toolname find in real-world DBMSs, and what LLM-generated test oracles discover them? (\Cref{subsec:new-bugs}) 
    \item How does \toolname compare to other baseline tools in terms of code coverage? (\Cref{subsec:coverages})
    \item How does the number of automatically discovered oracles affect the number of unique bugs found, compared to prior work where oracles are manually designed? (\Cref{subsec:effect-oracles})
    \item How effective is \toolname's SQL equivalence prover in filtering out false positives? (\Cref{subsec:effect-sql-prover})
    \item What are the time and monetary costs required for \toolname to generate a given number of test cases? (\Cref{subsec:cost-throughput})
\end{enumerate}

    \paragraph{Testbed} We conducted all experiments using a machine with 64 cores and 128 GB memory running on Ubuntu 24.04.
    We use o4-mini~\cite{openai-o4-mini} for all LLM calls with an Azure OpenAI API service in our experiments.
    We evaluate \toolname on five widely-used DBMSs: Dolt~\cite{dolt2025}, DuckDB~\cite{raasveldt2019duckdb}, MySQL~\cite{mysql2025}, PostgreSQL~\cite{momjian2001postgresql}, and TiDB~\cite{huang2020tidb}.
    Note that, we only choose DBMSs that are actively maintained and have been comprehensively tested by at least one prior work, making new bug findings more valuable.
    The details of the target DBMS are shown in \Cref{tab:target-dbms}.

    \subsection{New Bugs and Oracles}
    \label{subsec:new-bugs}
    \paragraph{Bug report policy} We continuously ran \toolname on the five DBMSs over three months.
    We used the latest development versions and reported bugs only when they could be reproduced on their latest versions.
    To avoid rediscovering known bugs, we first carefully reviewed the open issues of the target DBMS and our previous reports. Whenever a new patch was released for one of our reports, we switched to testing the latest release of the target DBMS.
    \begin{table}[t!]
        \resizebox{0.77\linewidth}{!}{%
        \begin{tabular}{lrrrrrrr}
        \hline
        \multirow{2}{*}{DBMS} & \multicolumn{1}{c}{\multirow{2}{*}{Reported}} & \multicolumn{4}{c}{Bug status}                                                   & \multicolumn{2}{c}{Bug type}                          \\
                            & \multicolumn{1}{c}{}                          & \multicolumn{1}{c}{Fixed} & \multicolumn{1}{c}{Conf.} & \multicolumn{1}{c}{Dup.} & \multicolumn{1}{l}{Pend.} & \multicolumn{1}{c}{Logic} & \multicolumn{1}{c}{Other} \\ \hline
        Dolt                  & 19                                            & 18                        & 1                         & 0 & 0                       & 18                        & 1                         \\
        DuckDB                & 8                                             & 6                         & 0                         & 1 & 1                      & 4                         & 4                         \\
        MySQL                 & 8                                             & 0                         & 5                         & 1 & 2                      & 8                         & 0                         \\
        PostgreSQL            & 1                                             & 1                         & 0                         & 0 & 0                      & 1                         & 0                         \\
        TiDB                  & 5                                             & 2                         & 3                         & 0 & 0                       & 5                         & 0                         \\ \hline
        \textbf{Total}        & \bugsnumber                                   & \fixednumber              & 9                        & 2 & 3                       & \logicnumber              & 5                         \\ \hline
        \end{tabular}}
        \vspace{2ex}
        \caption{\toolname found bugs statistics.}
        \label{tab:bug-statistics}
        \vspace{-4ex}
    \end{table}
    \paragraph{Overall result} \modifyb{\Cref{tab:bug-statistics}} summarizes the statistics of the \bugsnumber bugs that \toolname discovered during our testing campaign.
    In total, \toolname detected \bugsnumber previously unknown bugs, of which \logicnumber are logic bugs that cause incorrect query results.
    We also detected \othernumber other types of bugs, including crashes and performance issues. While finding such bugs is not the main focus of this paper, this still highlights the effectiveness of \toolname to generate complex, concrete queries for testing.
    \modifyc{Among the reports, \confirmednumber bugs have been confirmed by the developers; \fixednumber have already been fixed, while 9 are still in progress.} The remaining 2 bugs are duplicates: after developers reproduced and analyzed them, they found that these bugs shared the same root cause as some previously unfixed bugs.
    The \logicnumber logic bugs we found underscore the effectiveness of our LLM-generated test oracles. 
    Compared with recent studies that use manually designed test oracles for bug finding (\emph{e.g.,} reporting 21~\cite{ba2024keep}, 24~\cite{zhang2025constant}, and 35~\cite{jiang2024detecting} logic bugs), 
    \toolname finds more logic bugs in comparison, even though the DBMSs under test have already been extensively tested by prior work using manually crafted oracles that are designed for them.

    Next, we show several representative bugs discovered by \toolname in our new test oracles.

    \begin{figure}[t!]
        \begin{lstlisting}[label={lst:postgres-logic-bug}, caption={Incorrect \texttt{json} functions handling in PostgreSQL when executing \texttt{RIGHT} \texttt{JOIN}.}, captionpos=b, escapeinside={(*}{*)},]
CREATE TABLE t(c INT);
INSERT INTO t VALUES (1);
SELECT sub.c FROM (
    SELECT (*${\square_1}\triangleright\placeholder{\textsf{Expr}(t:INT)} \mapsto$*) 
    (*\textcolor{NavyBlue}{json\_array\_length(json\_array(3, 2, t.c))}*) 
    AS c FROM t
) AS sub
RIGHT JOIN t ON FALSE; -- {2} (*\faBug{}*)
SELECT sub.c FROM (
    SELECT NULL AS c FROM t
) AS sub
RIGHT JOIN t ON FALSE; -- {NULL} (*\faCheck{}*)
\end{lstlisting}
    \vspace{-3ex}
    \end{figure}

    \begin{figure}[t!]
    \begin{lstlisting}[label={lst:mysql-logic-bug}, caption={A logic bug in MySQL found by the similar test oracle in Listing~\ref{lst:postgres-logic-bug}.}, captionpos=b, escapeinside={(*}{*)},]
CREATE TABLE t(c0 INT);
INSERT INTO t VALUES (1);
SELECT * FROM t LEFT JOIN (
    SELECT MOD(5, 2) AS c0 FROM t
) AS t2 ON FALSE
WHERE t2.c0 IS NOT NULL; -- {1} (*\faBug{}*) {} (*\faCheck{}*)
\end{lstlisting}
    \vspace{-3ex}
    \end{figure}

    \paragraph{Example (1)} 
    Listing~\ref{lst:postgres-logic-bug} presents a logic bug in PostgreSQL that was detected by \toolname. 
    The test oracle leverages the null-handling semantics of \texttt{RIGHT} \texttt{JOIN}, which implies that when a join condition is \texttt{false}, all columns from the left table must be \texttt{NULL}. 
    Consequently, the first query should always yield a \texttt{NULL} row, which is identical to the second query's output, regardless of the value of the placeholder $\square_1$. 
    However, due to a bug in its handling of \texttt{json} functions, PostgreSQL erroneously returns \texttt{2}.
    This discovery highlights both the effectiveness of our novel, automatically generated test oracles and the LLM's ability to synthesize complex queries involving advanced SQL features, such as \texttt{json} functions. 
    The finding is particularly noteworthy given that PostgreSQL is one of the world's most robust DBMSs. 
    While several recent database testing studies have targeted PostgreSQL~\cite{zhong2025scaling,zhong2025testing,song2025detecting,ye2025sembug}, none reported finding new bugs, which attests to the effectiveness of our approach. 
    After we reported the issue, developers confirmed and fixed it within 24 hours. 

    We also find that prior test oracles, such as TLP~\cite{rigger2020finding}, struggle to detect this bug.
    For example, when we append a predicate to the first query, such as $P = \texttt{sub.c} > 2$, and then apply the three variants $\texttt{WHERE}\  P$, $\texttt{WHERE NOT}\  P$, and $\texttt{WHERE}\  P \texttt{ IS NOT NULL}$, the results of the three partitioned queries remain identical.
    As a result, TLP does not report a bug in this case.
    
    Moreover, a similar oracle uncovered a logic bug in MySQL (Listing~\ref{lst:mysql-logic-bug}), further demonstrating the versatility of the oracles generated by \toolname.
    Specifically, the query in Listing~\ref{lst:mysql-logic-bug} uses a \texttt{LEFT} \texttt{JOIN} combined with a \texttt{WHERE} clause to filter out rows where the right table's column is \texttt{NULL}, which should always yield an empty result set.
    However, due to a bug in MySQL's handling of the \texttt{MOD} function in this context, it incorrectly returns a row with value \texttt{1}.

    \paragraph{Example (2)} 
Listing~\ref{lst:dolt-logic-bug} shows a logic bug in Dolt, which was detected using an oracle generated by \toolname that leverages the semantics of the \texttt{EXISTS} predicate and primary key constraints. 
Specifically, the \texttt{EXISTS} in the first query predicate should always return \texttt{TRUE} if the \texttt{t.c0} is not nullable, which is guaranteed by the primary key constraint on \texttt{(c0, c1)}.
Thus, the first query should return all rows from table \texttt{t}, identical to the second query.
However, due to a bug in Dolt's handling of the \texttt{EXISTS} predicate, it erroneously duplicates all rows in the output.

    \begin{figure}
        \begin{lstlisting}[label={lst:dolt-logic-bug}, caption={\texttt{\textbf{EXISTS}} incorrectly duplicates rows in Dolt.}, captionpos=b, escapeinside={(*}{*)},]
CREATE TABLE t(c0 INT, c1 INT, PRIMARY KEY (c0, c1));
INSERT INTO t VALUES (1, 1);
INSERT INTO t VALUES (2, 2);
INSERT INTO t VALUES (2, 3);
SELECT * FROM (*${\square_1}\triangleright\placeholder{\textsf{Table}(c0:INT, c1:INT)} \mapsto$*) (*\textcolor{NavyBlue}{t}*)
WHERE EXISTS (
  SELECT 1 FROM t AS x WHERE x.c0 = t.c0 ); 
-- {(1,1), (2,2), (2,3), (1,1), (2,2), (2,3)} (*\faBug{}*)
SELECT * FROM (*${\square_1}\triangleright\placeholder{\textsf{Table}(c0:INT, c1:INT)} \mapsto$*) (*\textcolor{NavyBlue}{t}*); 
-- {(1,1), (2,2), (2,3)} (*\faCheck{}*)

\end{lstlisting}
    \vspace{-4.3ex}
    \end{figure}

\paragraph{Example (3)}
Listing~\ref{lst:dolt-logic-bug-2} shows another bug in Dolt with the handling of \texttt{LATERAL} joins in Dolt, which was detected using an oracle that leverages the semantics of \texttt{CROSS JOIN} and \texttt{LATERAL} joins. A \texttt{LATERAL} join allows the right-side subquery to reference columns from the left-side input; Dolt implements this feature as part of its SQL support.
Specifically, if a \texttt{CROSS JOIN LATERAL} is used to join a left table with a constant right table, \emph{i.e.,} $\texttt{SELECT} 1$.
, \emph{e.g.,} a $\mathsf{Table}$ placeholder $\square_1$, the result should be identical to a regular \texttt{CROSS JOIN} with the same right table, \emph{i.e.,} the first query is semantically equivalent to the second query.
However, due to a bug in Dolt's handling of \texttt{LATERAL} joins, when instantiating the right table to \texttt{(SELECT 1 AS c0) AS v}, the first query incorrectly returns an empty result set instead of the expected row.

\begin{figure}
        \begin{lstlisting}[label={lst:dolt-logic-bug-2}, caption={Incorrect handling of \texttt{\textbf{LATERAL}} joins in Dolt.}, captionpos=b, escapeinside={(*}{*)},]
CREATE TABLE t0(c0 BOOLEAN);
CREATE TABLE t1(c0 INT);
INSERT INTO t0 VALUES (TRUE);
INSERT INTO t1 VALUES (0);
SELECT v.c0, t1.c0 FROM t0
CROSS JOIN LATERAL (
    (*${\square_1}\triangleright\placeholder{\textsf{Table}(c0:INT)} \mapsto$*) (*\textcolor{NavyBlue}{(SELECT 1 AS c0) AS v}*)
) JOIN t1 ON v.c0 > t1.c0; -- {} (*\faBug{}*)
SELECT v.c0, t1.c0 FROM t0
CROSS JOIN (
    (*${\square_1}\triangleright\placeholder{\textsf{Table}(c0:INT)} \mapsto$*) (*\textcolor{NavyBlue}{(SELECT 1 AS c0) AS v}*)
) JOIN t1 ON v.c0 > t1.c0; -- {(TRUE, 0)} (*\faCheck{}*)
\end{lstlisting}
    \vspace{-3ex}
\end{figure}

\paragraph{Example (4)} 
The Listing~\ref{lst:duckdb-crash-bug} presents a DuckDB crash caused by a complex expression synthesized by \toolname. The expression, generated by an LLM, contains a CTE named \texttt{seq} that produces a sequence of integers. 
The crash occurs when this CTE is used with an aggregation function, \emph{i.e.,} \texttt{SUM}, inside an \texttt{EXISTS} predicate. Notably, synthesizing such a non-trivial expression is challenging for traditional grammar-based generators, which require both complex query structures and various SQL features.
This case demonstrates that by leveraging the generative capabilities of LLMs for SQL snippets, \toolname can effectively uncover not only logic bugs but also critical crashes.

\begin{figure}
        \begin{lstlisting}[label={lst:duckdb-crash-bug}, caption={A crash bug in DuckDB triggered by a LLM-synthesized expression with \texttt{\textbf{seq}}.}, captionpos=b, escapeinside={(*}{*)},]
CREATE TABLE t0(c0 INT);
CREATE TABLE t1(c0 BOOLEAN);
SELECT * FROM t0
WHERE EXISTS (
  SELECT 1 FROM t1
  WHERE (*${\square_1}\triangleright\placeholder{\textsf{Expr}(t0:INT)} \mapsto$*)
  (*\textcolor{NavyBlue}{(WITH seq(i) AS (VALUES (1)) SELECT sum(i) *\ t0.c0 FROM seq)}*) IS NOT NULL
); -- (crash) (*\faBug{}*)
\end{lstlisting}

\vspace{-2ex}
\end{figure}

\definecolor{colorTLO}{HTML}{cc6666}
\definecolor{colorSQL}{HTML}{6699cc}
\definecolor{colorSQLpp}{HTML}{66cc99}
\definecolor{colorEET}{HTML}{ffcc66}
\definecolor{colorSQLR}{HTML}{ff6699}
\definecolor{colorArgus}{HTML}{8A2BE2}

\begin{figure}[tp!]
    \centering
    \begin{minipage}{0.95\linewidth}
        \centering
        \begin{tikzpicture}[baseline=-0.5ex]
            \draw[colorTLO, solid, line width=1.6pt] (0,0) -- (2.8em,0) node[pos=0.5, star, star points=5, fill=colorTLO, inner sep=1.4pt] {};
        \end{tikzpicture}
        \toolname
        \quad\quad\quad
        \begin{tikzpicture}[baseline=-0.5ex]
            \draw[colorSQL, solid, line width=1.6pt] (0,0) -- (2.8em,0) node[pos=0.5, circle, fill=colorSQL, inner sep=1.5pt] {};
        \end{tikzpicture}
        SQLancer
        \quad\quad\quad
        \begin{tikzpicture}[baseline=-0.5ex]
            \draw[colorSQLpp, solid, line width=1.6pt] (0,0) -- (2.8em,0) node[pos=0.5, rectangle, fill=colorSQLpp, inner sep=1.8pt] {};
        \end{tikzpicture}
        SQLancer++
        \quad\quad\quad
        \begin{tikzpicture}[baseline=-0.5ex]
            \draw[colorEET, solid, line width=1.6pt] (0,0) -- (2.8em,0) node[pos=0.5, diamond, fill=colorEET, inner sep=1.8pt] {};
        \end{tikzpicture}
        EET
    \end{minipage}\\[1.5ex] 
    \centering
    \begin{minipage}{1\linewidth}
        \centering
        \begin{minipage}{0.48\linewidth}
            \centering
            \includegraphics[width=\linewidth]{./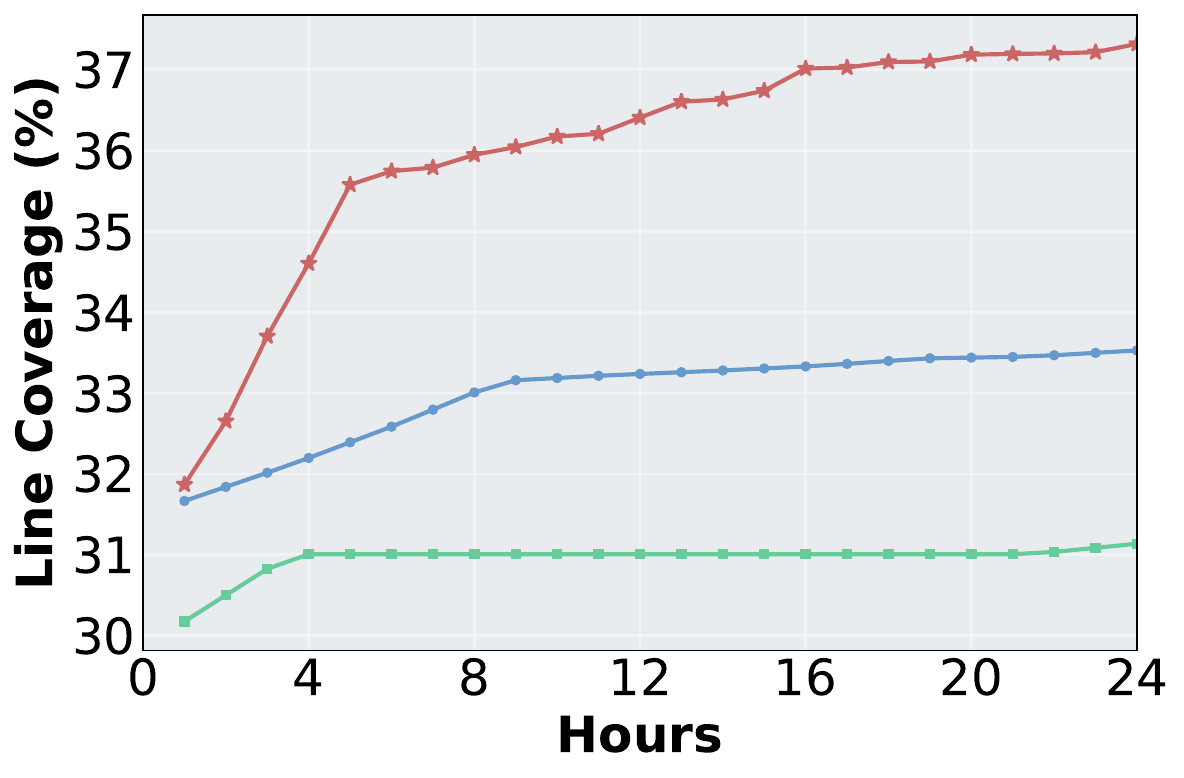}\\[-0.4ex]
            \scriptsize (a) DuckDB line
        \end{minipage}\hfill
        \begin{minipage}{0.48\linewidth}
            \centering
            \includegraphics[width=\linewidth]{./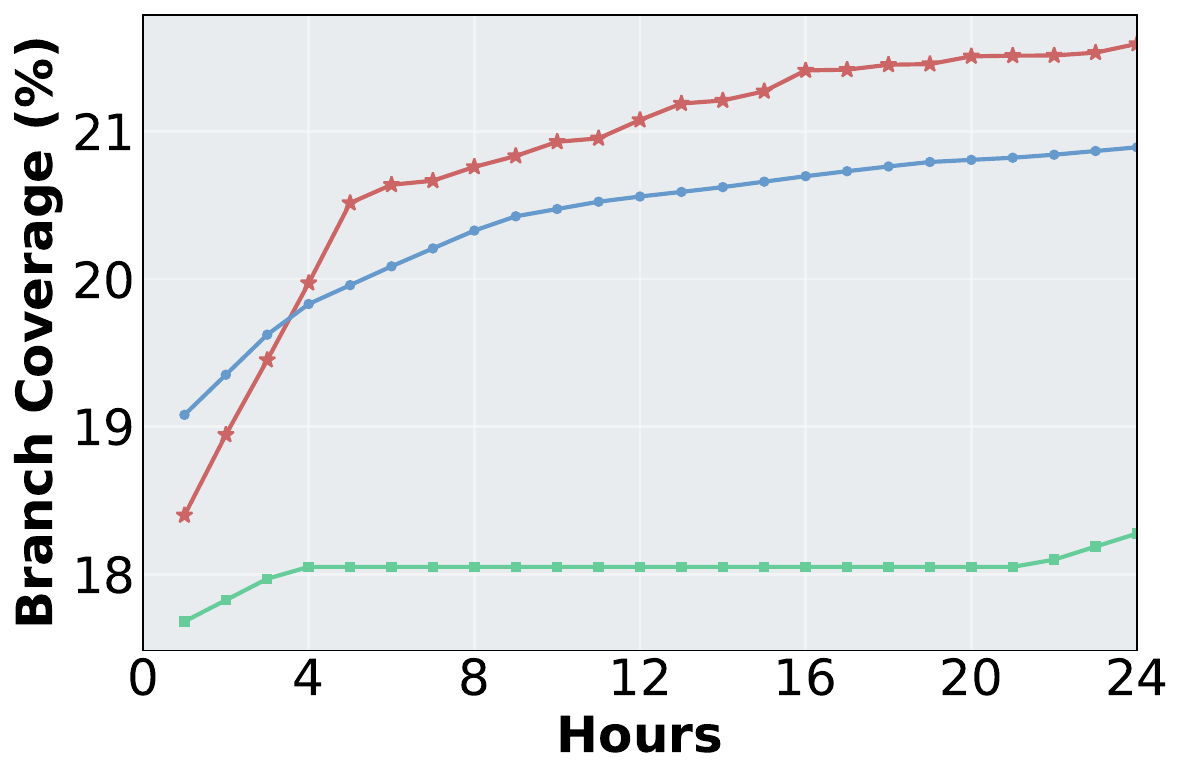}\\[-0.4ex]
            \scriptsize (b) DuckDB branch
        \end{minipage}\\[1ex]
        \begin{minipage}{0.48\linewidth}
            \centering
            \includegraphics[width=\linewidth]{./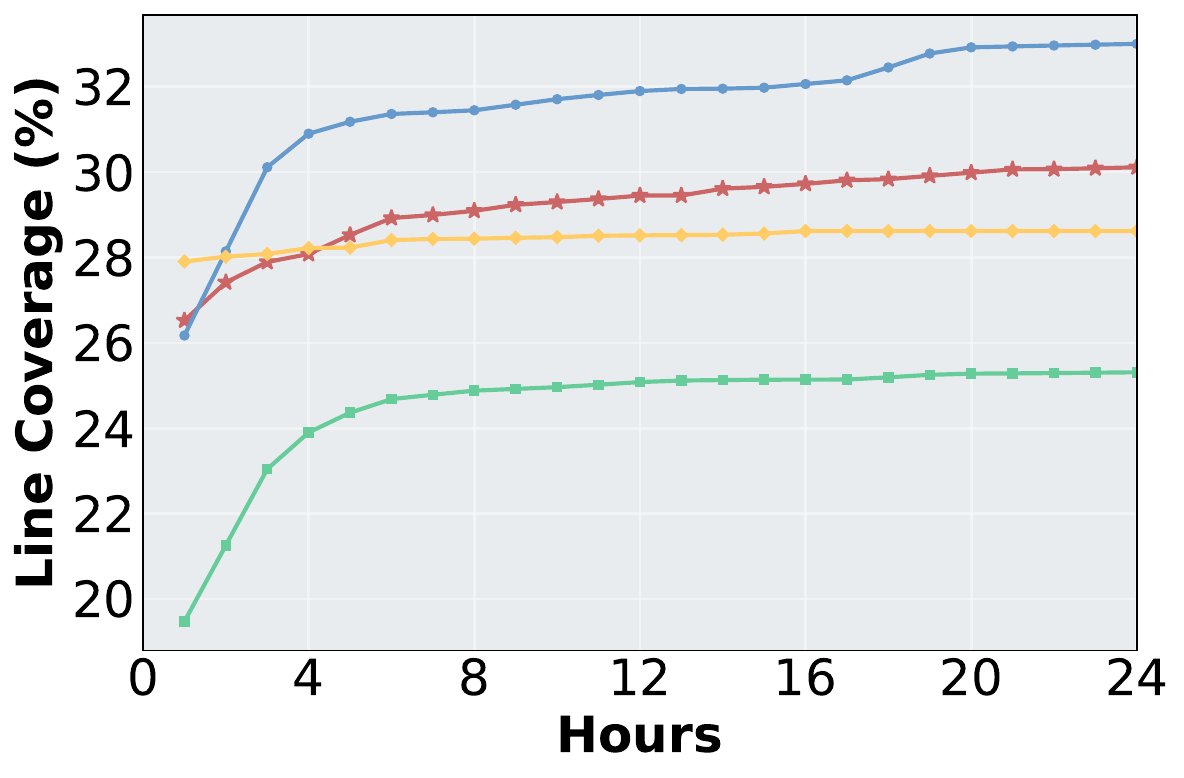}\\[-0.4ex]
            \scriptsize (c) PostgreSQL line
        \end{minipage}\hfill
        \begin{minipage}{0.48\linewidth}
            \centering
            \includegraphics[width=\linewidth]{./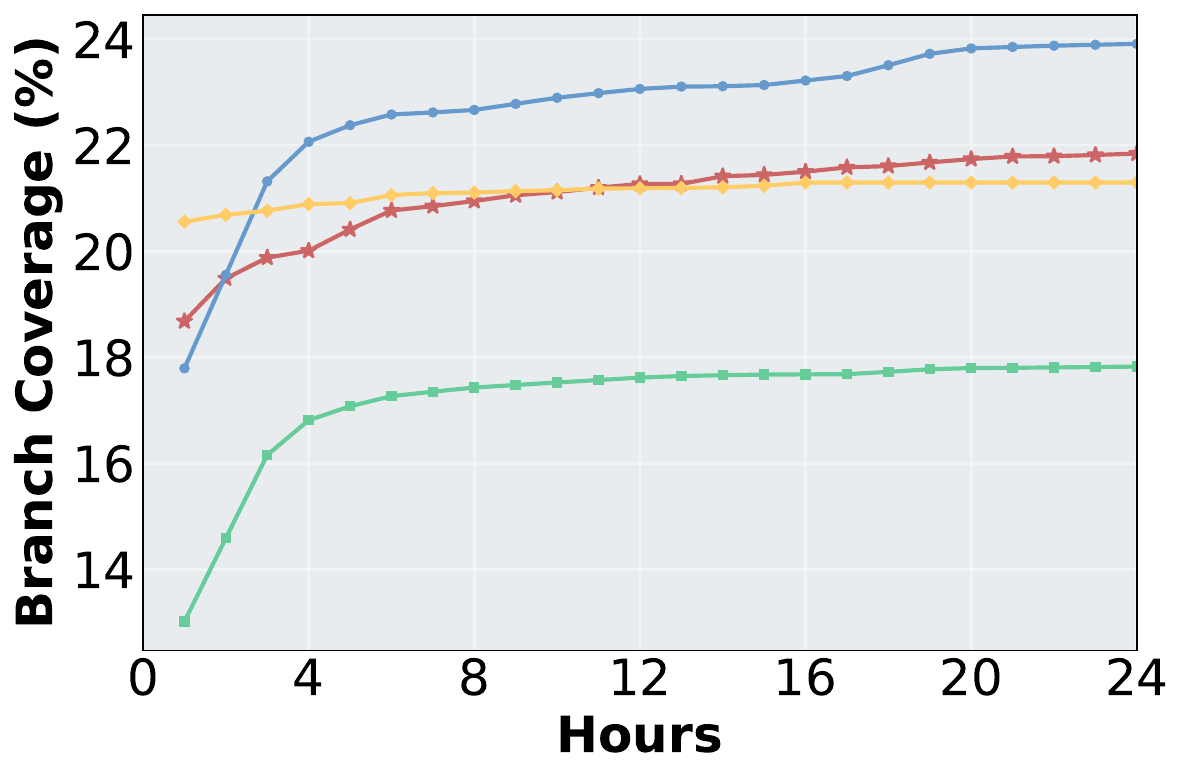}\\[-0.4ex]
            \scriptsize (d) PostgreSQL branch
        \end{minipage}
    \end{minipage}
    \vspace{1ex}
    \caption{\modifyb{Code coverage achieved by \toolname, SQLancer, and SQLancer++ on DuckDB and PostgreSQL over 24-hour runs.}}
    \label{fig:code-coverage}
\end{figure}

\subsection{Code Coverage}
\label{subsec:coverages}
We compared \toolname with three DBMS logic bugs finding tools, SQLancer~\cite{ba2023testing,rigger2020finding,rigger2020testing,rigger2020detecting,zhang2025constant}, SQLancer++~\cite{zhong2025scaling} and EET~\cite{jiang2024detecting} in multiple coverage metrics.
We compared on DuckDB~\cite{raasveldt2019duckdb} and PostgreSQL~\cite{momjian2001postgresql}.
As a state-of-the-art, open-source DBMS testing framework, SQLancer supports most of the latest test oracles~\cite{ba2024cert,zhang2025constant,song2024detecting}.
SQLancer++ is a scalable variant that can be easily extended to multiple DBMSs with only lightweight modifications.
EET~\cite{jiang2024detecting} is a recent tool based on SQLsmith~\cite{Seltenreich2022sqlsmith}'s generator and their carefully designed test oracles to produce complex queries.

In terms of test oracles, \emph{code coverage} is an indicator for the number of features and execution plans that are tested by them. 
Although code coverage does not strongly correlate with the ability to find logic bugs~\cite{zhong2025scaling}, it is still a fair metric to evaluate the diversity of test cases generated by \toolname and the effectiveness of its corpus synthesis. 
Meanwhile, \emph{metamorphic coverage}~\cite{ba2025metamorphic} is a more relevant metric for evaluating the effectiveness of test oracles and is highly related to logic bug finding abilities according to the historical bug study in SQLite and DuckDB~\cite{ba2025metamorphic}.
\modifyb{
Unlike traditional code coverage, which simply counts all executed code, metamorphic coverage measures the portions of the program exercised differently by the two executions in a pair of equivalent queries.
Intuitively, because only these differential execution paths can lead to inconsistent results between semantically equivalent queries, metamorphic coverage provides a more precise measure of the code actually validated by the test oracle.
Empirical studies on DBMSs like SQLite and DuckDB show that metamorphic coverage correlates strongly with real logic bugs and overlaps substantially with historical bug-fix locations.
}

\vspace{-2ex}
\paragraph{Code coverage}
\Cref{fig:code-coverage} shows the line and branch code coverage for DuckDB and PostgreSQL over a 24-hour testing period starting from a clean build and empty database. 
In DuckDB, \toolname achieves 19.9\% and 18.1\% higher line and branch coverage than SQLancer++, respectively, and 11.3\% and 3.4\% higher than SQLancer.
Note that, we did not compare with EET on DuckDB, as it does not support this DBMS.
In PostgreSQL, \toolname slightly underperforms SQLancer overall but outperforms SQLancer++ (by 19.0\% in line coverage and 22.5\% in branch coverage) and EET (by 5.2\% and 2.6\%, respectively).
This suboptimal performance is expected, as SQLancer has been extensively optimized specifically for PostgreSQL by the open-source community over many years.
Specifically, SQLancer supports 22 types of DDL statements (e.g., \texttt{CREATE SEQUENCE}) and various DML statements beyond \texttt{SELECT} queries that PostgreSQL supports, whereas \toolname focuses on detecting logic bugs across different DBMS implementations.

For a finer-grained comparison, we additionally evaluated \toolname and SQLancer on PostgreSQL with respect to optimizer code coverage, since the optimizer is a core component that is closely tied to \texttt{SELECT} queries and widely studied in prior work~\cite{tang2023detecting, rigger2020detecting}, along with the diversity of features exercised by \texttt{SELECT} queries that are covered by the test queries.
The results show that \toolname achieved 66.46\% line coverage and 55.85\% branch coverage of PostgreSQL’s optimizer, compared to 62.33\% and 52.17\% by SQLancer.
To evaluate feature diversity, we randomly sampled ten \texttt{SELECT} queries from each tool and counted their unique features using two third-party parsers: \emph{pglast}~\cite{pglast} (counting number of unique PostgreSQL types) and \emph{sqlparse}~\cite{sqlparse} (counting number of unique AST nodes).
We found that \toolname covered 23 features in pglast and 151 features in sqlparse, while SQLancer only covered 15 and 76, respectively.
This demonstrates \toolname's stronger ability to generate feature-rich queries.
The sampled queries are provided in~Appendix~D.


\begin{table}[t!]
    \centering
    \begin{tabular}{lrrr}
    \hline
    \multirow{2}{*}{Approach} & \multirow{2}{*}{Lines} & \multirow{2}{*}{Functions} & \multirow{2}{*}{Branches} \\
                            &                        &                            &                           \\ \hline
    SQLancer                  & 3.256\%                & 1.230\%                    & 1.313\%                   \\
    \toolname                 & \textbf{17.820\%}      & \textbf{7.910\%}           & \textbf{7.315\%}          \\
                              & \textbf{\textcolor{OliveGreen}{5.473$\times$}} & \textbf{\textcolor{OliveGreen}{6.431$\times$}} & \textbf{\textcolor{OliveGreen}{5.571$\times$}} \\ \hline
    \end{tabular}
    \vspace{2ex}
    \caption{Average metamorphic coverage on DuckDB of 10 test suites for \toolname and SQLancer.}
    \vspace{-4ex}
    \label{tab:metamorphic-coverage}
\end{table}

Note that we did not compare with mutation-based testing tools, such as SQLRight~\cite{liang2022detecting} and SQuirrel~\cite{zhong2020squirrel}, as they highly rely on the quality of seed queries and use the official unit tests provided by the DBMS as the initial input. 
This introduces bias in the comparison, as the official unit tests are usually well-designed and purposefully cover many code paths for the DBMSs that they are designed for. In fact, SQLRight achieved $51.3\%$ line coverage on PostgreSQL in the first few minutes after feeding PostgreSQL's official unit test cases as seed queries, but quickly saturated after that without further improvement during the remainder of the testing period.

Overall, our results demonstrate that \toolname can generate feature-rich, diverse, and complex test cases at scale, covering more optimizer paths than state-of-the-art techniques.

\paragraph{Metamorphic coverage}
For metamorphic coverage, we compared \toolname with SQLancer on DuckDB across a fixed number of test cases, following the same setting in~\cite{ba2025metamorphic}.
We did not evaluate metamorphic coverage on PostgreSQL, which has not been supported by ~\cite{ba2025metamorphic}'s implementation.
Following the experimental setup in~\cite{ba2025metamorphic}, we generated 10 test suites for SQLancer, each with 100 test cases. 
Specifically, SQLancer produce 50\% of the test cases from TLP~\cite{rigger2020finding} and 50\% from NoREC~\cite{rigger2020detecting}.
For \toolname, we generated 10 CAQ pairs as test suites, and instantiated each pair with 100 different snippets, respectively.

We reported the average metamorphic coverage of the 10 test suites in \Cref{tab:metamorphic-coverage}.
As shown, \toolname achieves 5.473$\times$, 6.431$\times$, and 5.571$\times$ more line, function, and branch coverage than SQLancer, respectively.
We believe this is because \toolname's new test oracles can significantly alter the query structures between equivalent queries, thanks to LLM's creativity, which covers a broader range of different code paths than prior works.
This significant improvement demonstrates the effectiveness of \toolname's new test oracles in comparing different execution paths, and the potential to detect more logic bugs that state-of-the-art testers missed.


\subsection{Effect of Test Oracles}
\label{subsec:effect-oracles}

\begin{figure}[t!]
    \centering
    \includegraphics[width=0.5\linewidth]{./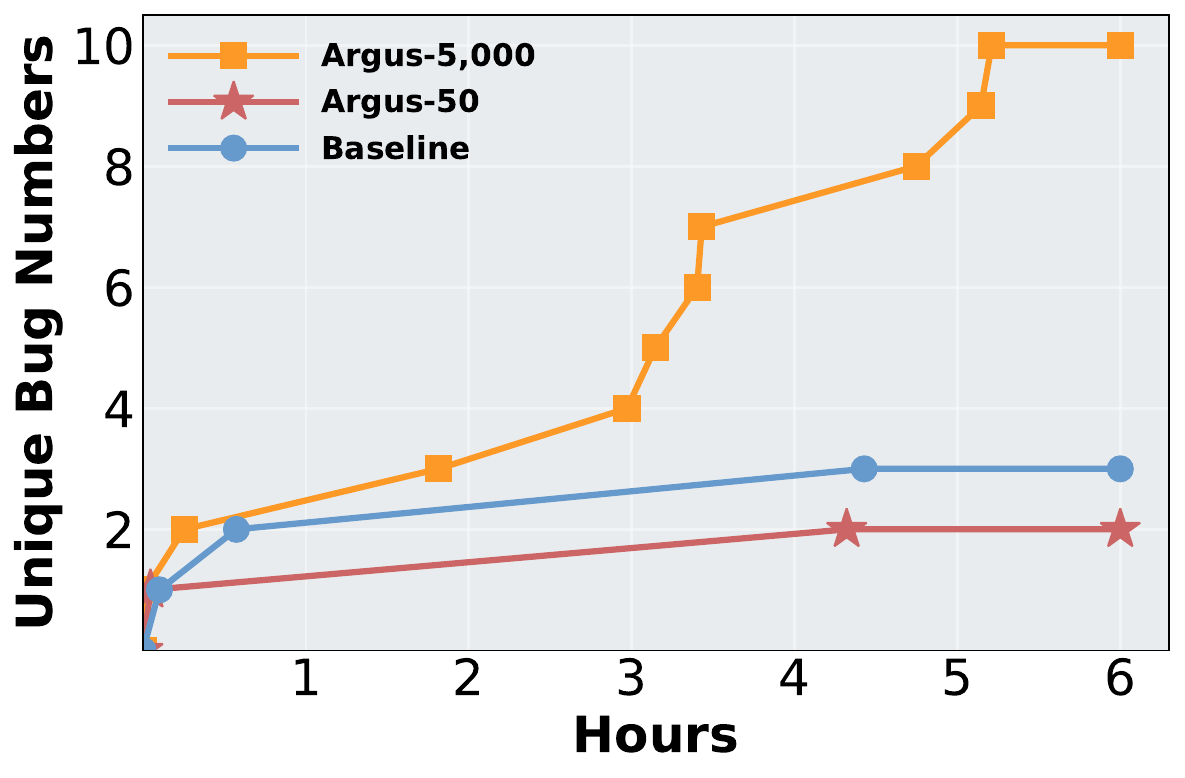}
    \vspace{-0.2in}
    \caption{Number of unique logic bugs found by different sets of oracles within a 6-hour testing on Dolt.}
    \label{fig:unique-bugs-vs-oracles}
    \vspace{-2ex}
\end{figure}

We now evaluate \toolname's new oracles by comparing the number of unique logic bugs they detected in Dolt v1.0.0. We selected this historical version to ensure a fair comparison, as \texttt{git-bisect} allows us to precisely link each bug to its corresponding fix commit, providing an accurate count of unique bugs per method.

We compared a composite baseline of test oracles against two sets of LLM-generated test oracles by \toolname, with sizes of 50 and 5,000, \emph{i.e.,} \toolname-50 and \toolname-5,000, respectively. 
The baseline combines 11 test oracles from four previous works: TLP~\cite{rigger2020finding}, NoREC~\cite{rigger2020detecting}, EET~\cite{jiang2024detecting}, and DQP~\cite{ba2024keep}.
To minimize confounding factors, we standardized the experimental conditions. 
Specifically, the baseline oracles were represented in the same CAQ pair format and instantiated using the same snippet corpus as \toolname. Furthermore, since the CAQ format requires an associated schema, we also randomly generated a new schema for the baseline oracles before each instantiation.
The details about how to convert the baseline oracles to CAQ pairs are provided in~Appendix~C.

\Cref{fig:unique-bugs-vs-oracles} shows the number of unique logic bugs found by different sets of oracles within a 6-hour testing. 
We select a 6-hour time window for a fair comparison, following common practice in fuzz testing~\cite{klees2018evaluating} and prior studies on DBMS logic bug detection~\cite{zhong2025testing}, which typically fix the testing duration within a range of 1 -- 24 hours.
As shown, \toolname-5,000 found 10 unique bugs, significantly outperforming the baseline, which found only 3. 
In contrast, \toolname-50 found just 2 bugs. 
This underperformance was expected, as the baseline oracles were carefully designed by human experts. 
These results demonstrate the importance of the number of oracles in detecting unique logic bugs and underscore the limitations of manual oracle design, thus highlighting the necessity of \toolname's automated test oracle discovery. 

\begin{figure}[t!]
    \begin{lstlisting}[language=SQL, label={lst:false-positive-example}, caption={A false positive example from LLM-as-a-judge.}, captionpos=b]
SELECT * FROM v WHERE TRUE; 
SELECT * FROM v WHERE v.c >= v.c OR v.c < v.c
\end{lstlisting}
\vspace{-4ex}
\end{figure}

\subsection{Effect of SQL Equivalence Prover}
    \label{subsec:effect-sql-prover}
We evaluate our SQL equivalence prover's false positive and false-negative rates against an LLM-as-a-judge baseline. For this baseline, we use GPT-5 to determine whether a candidate CAQ pair is equivalent directly.

\paragraph{False positive bug reports}
To quantify the impact of the prover on false positives, we performed an ablation that disables the prover while using an LLM-as-a-judge approach~\cite{zhao2023llm}, and ran \toolname on DuckDB until we collected 20 bug reports.
We then manually adjudicated each report as a true positive or a false positive.
Surprisingly, we found that \textbf{all} of them were false positives.
For instance, LLM-as-a-judge incorrectly decides the two queries shown in Listing~\ref{lst:false-positive-example} as equivalent, which is actually inequivalent when $v.c$ is \texttt{NULL}.

To further investigate the issue of extremely high false positive rates, we generated 20 candidate CAQ pairs that the LLM judged as equivalent, and manually verified them.
We found that 1 of these 20 pairs was an incorrect test oracle, \emph{i.e.,} it consists of inequivalent CAQ pairs.
While the accuracy of LLM-as-a-judge is high, bugs in mature DBMSs are exceedingly rare, as it often takes thousands of queries to discover a single bug.
Consequently, even a low false positive rate can generate a volume of false reports that overwhelmingly drown out true positives.
In addition, such false positives are particularly unacceptable when \toolname is integrated into CI/CD pipelines or production environments, as developers have limited time to investigate each report.
Therefore, the SQL equivalence prover is necessary to ensure the soundness of \toolname's test oracles and make it practical for real-world use.

\modifyc{
    In terms of the performance with the SQL equivalence prover, we sample 20 CAQ pairs that are accepted by the prover as equivalent, as well as 20 pairs that are instantiated from them.
    After our manual validation, we find that all 20 CAQ pairs and concrete instances derived from them are indeed equivalent.
    This demonstrates the prover's effectiveness in identifying and validating equivalent queries and reducing false positives.
}

\paragraph{False negatives in test-oracle generation}
To estimate the prover's false-negative rate, \emph{i.e.,} its inability to prove correct test oracles, we ran two configurations of \toolname on DuckDB: one using the prover to check CAQ pairs, and one using LLM-as-a-judge.
In each configuration, we continued testing until the tool had rejected 10 candidate CAQ pairs as ``inequivalent,'' and then we manually adjudicated each pair. 
We found that 8 of the 10 pairs were false negatives for the prover, while 5 of the 10 pairs were false negatives for the LLM-as-a-judge.
The results also show that the prover's false negative rate is comparable to that of state-of-the-art LLMs.
Note that, false negatives from the prover are expected in general, as it is intentionally conservative to ensure soundness.

\subsection{Cost and Efficiency Analysis}
\label{subsec:cost-throughput}

We next evaluate CAQ's impact on cost and efficiency by comparing \toolname against the naive baseline introduced in \Cref{sec:introduction}. This baseline uses the same SQL equivalence prover but directly prompts an LLM to generate entire pairs of equivalent queries from the seed queries rather than CAQ pairs.
After testing on Dolt 1.0.0 for one hour, \toolname detected a logic bug while the baseline found none, and we measured the corresponding test cases generated and LLM costs.

\begin{figure}[t!]
    \centering
    \includegraphics[width=0.6\linewidth]{./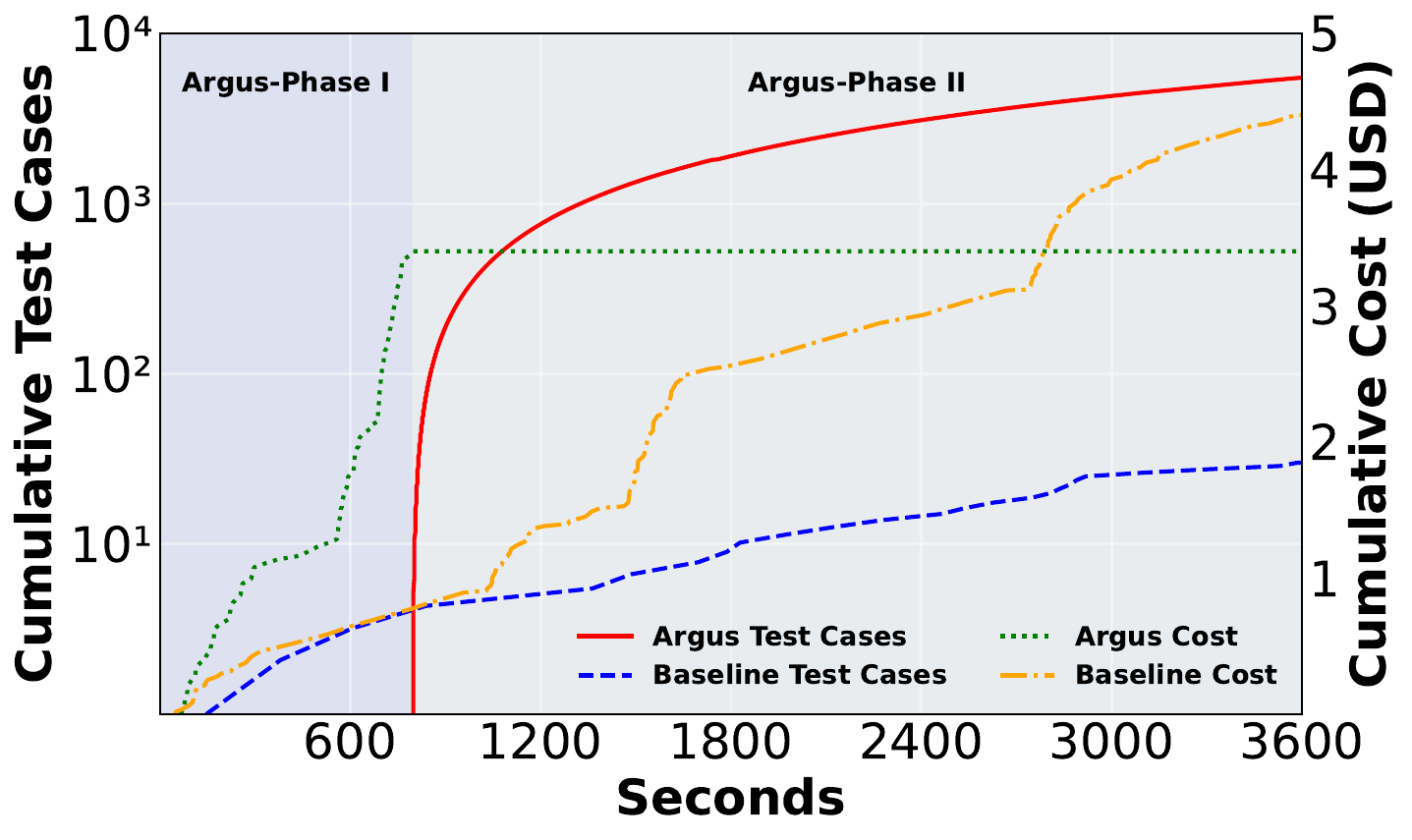}
    \vspace{-.2in}
    \caption{\modifyb{Cost and efficiency comparison between \toolname and the naive baseline on Dolt over a 1-hour run.}}
    \label{fig:cost-throughput}
    \vspace{-2ex}
\end{figure}
As shown in \Cref{fig:cost-throughput}, \toolname achieves a significantly higher test case throughput than the naive baseline. Unlike the baseline, which slowly generates complete equivalent pairs, \toolname spends only a few minutes creating abstract CAQ pairs (Phase I) and then rapidly instantiates them into numerous concrete test queries (Phase II). 
This process is also highly economical: generating the CAQ pairs costs about \$3 in LLM calls, and instantiating test cases costs roughly \$1 per 1,000 instantiated test cases, and leveraging a reusable corpus of 100,000 snippets built for only \$12. This shows that \toolname is a cost-effective solution for detecting logic bugs with a high throughput. 

\subsection{Component-wise Analysis}
\modifyb{
We analyze the contributions of two \toolname's key components: the diversity-oriented equivalent query generation (\Cref{sec:caq-pairs-generation}) and the LLM-powered SQL snippets generation (\Cref{sec:corpus-generation}).

\paragraph{Diversity-oriented equivalent CAQ generation}
To evaluate the impact of diversity-oriented equivalent query generation, we compare \toolname’s CAQ generation method against two baselines: (i) a naive approach that independently prompts an LLM to generate equivalent queries, and (ii) a beam-search–based method that produces multiple candidates at each step and selects a diverse subset, according to a diversity metric, to condition the next iteration.
}
\modifyb{
For all methods, we give the same seed CAQs, fix the LLM budget to 1,000 generated CAQ candidates and evaluate: (1) the number of valid queries (\emph{i.e.,} those that pass the SQL equivalence checker); and (2) the similarity of the generated queries, measured by the average tree-edit distance between DuckDB query plans across all pairs of equivalent queries. Specifically, given two query plan tree $T_1$ and $T_2$, let $\text{dist}(T_1, T_2)$ be the tree-edit distance between them.
The similarity score is computed as: $\frac{2}{N(N-1)} \sum_{i \neq j} \big(1 - \frac{\text{dist}(T_i, T_j)}{|T_i| + |T_j|} \big)$, where $N$ is the number of valid query pairs.
As shown in \Cref{tab:diversity-comparison}, \toolname’s CAQ generation method outperforms both baselines by producing more valid CAQs while maintaining the lowest similarity.
}
\begin{table}[t!]
    \centering
    \begin{tabular}{lrr}
    \hline
    Method & Avg. Similarity & \# of Valid Queries \\
    \hline
    Clustering  & \textbf{0.617} & \textbf{86} \\
    Beam-search & 0.766          & 48 \\
    Naive       & 0.688          & 81 \\
    \hline
    \end{tabular}
    \vspace{2ex}
    \caption{\modifyb{Comparison of \toolname's equivalent CAQ generation methods with baselines. Lower similarity indicates higher diversity.}}
    \vspace{-4ex}
    \label{tab:diversity-comparison}
\end{table}

\paragraph{LLM-powered SQL snippet generation}
\modifyb{
    We evaluate the effectiveness of LLM-powered SQL snippet generation by comparing \toolname's method against a variant that only uses SQLancer's grammar-based generator to build the snippet corpus during the CAQ instantiation phase. We use the code coverage metric in \Cref{subsec:coverages} to evaluate both methods after a 24-hour testing on DuckDB.
    As shown in \Cref{fig:code-coverage-ablation}, \toolname consistently outperforms the grammar-based variant in both line and branch coverage.
    Without LLM-generated snippets, the grammar-based variant can still outperform baselines in terms of line coverage but struggles with branch coverage.
}

\begin{figure}[t!]
    \centering
    \begin{minipage}{0.95\linewidth}
        \centering
        \begin{tikzpicture}[baseline=-0.5ex]
            \draw[colorTLO, solid, line width=1.6pt] (0,0) -- (2.8em,0) node[pos=0.5, star, star points=5, fill=colorTLO, inner sep=1.4pt] {};
        \end{tikzpicture}
        \toolname
        \quad\quad\quad
        \begin{tikzpicture}[baseline=-0.5ex]
            \draw[colorSQL, solid, line width=1.6pt] (0,0) -- (2.8em,0) node[pos=0.5, circle, fill=colorSQL, inner sep=1.5pt] {};
        \end{tikzpicture}
        SQLancer
        \\
        \begin{tikzpicture}[baseline=-0.5ex]
            \draw[colorSQLpp, solid, line width=1.6pt] (0,0) -- (2.8em,0) node[pos=0.5, rectangle, fill=colorSQLpp, inner sep=1.8pt] {};
        \end{tikzpicture}
        SQLancer++
        \quad\quad\quad
        \begin{tikzpicture}[baseline=-0.5ex]
            \draw[colorArgus, solid, line width=1.6pt] (0,0) -- (2.8em,0) node[pos=0.5, regular polygon, regular polygon sides=3, fill=colorArgus, inner sep=1pt] {};
        \end{tikzpicture}
        Argus-grammar-based
    \end{minipage}\\[1.5ex] 
    \centering
    \begin{minipage}{0.95\linewidth}
        \centering
        \begin{minipage}{0.48\linewidth}
            \centering
            \includegraphics[width=\linewidth]{./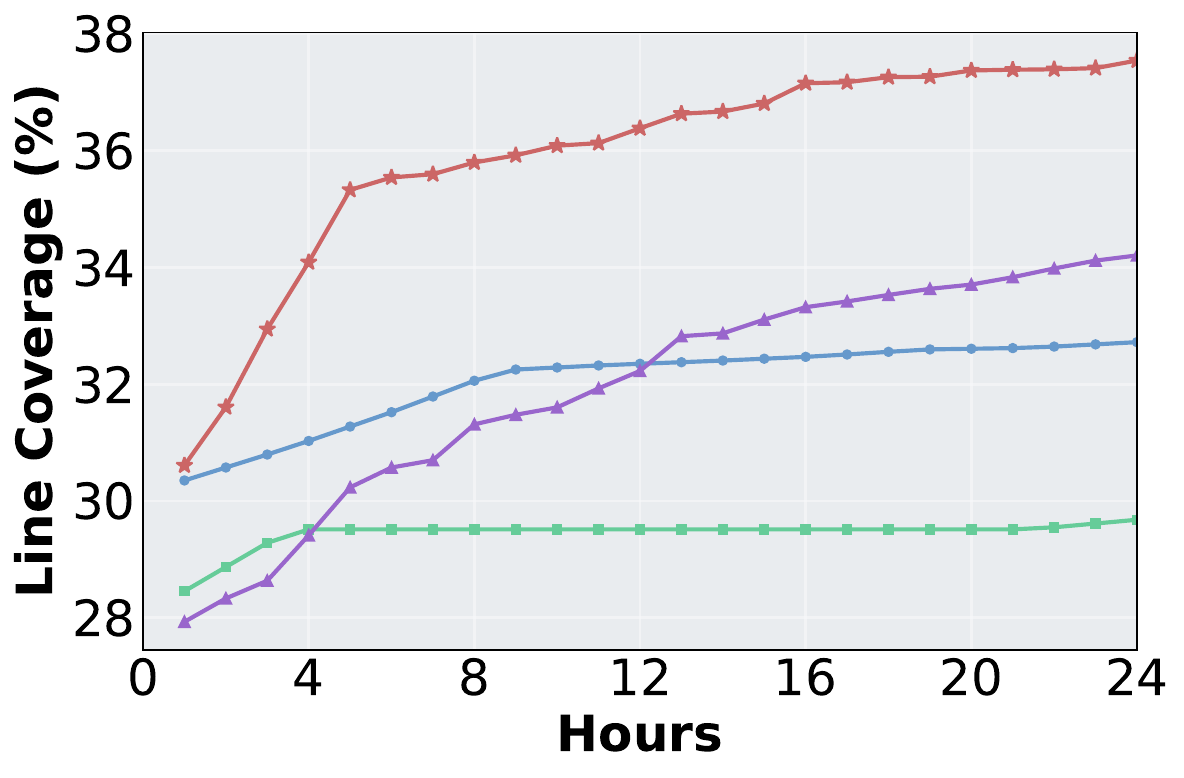}\\[-0.4ex]
            \scriptsize (a) DuckDB line
        \end{minipage}\hfill
        \begin{minipage}{0.48\linewidth}
            \centering
            \includegraphics[width=\linewidth]{./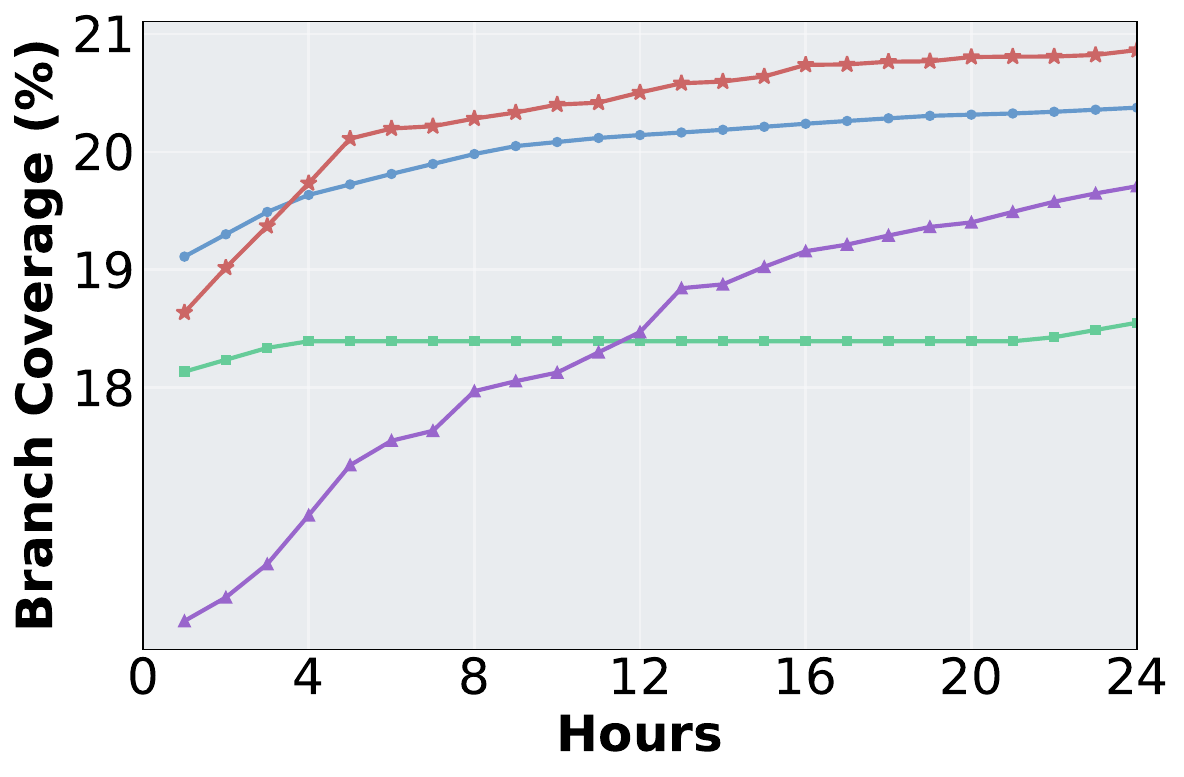}\\[-0.4ex]
            \scriptsize (b) DuckDB branch
        \end{minipage}
    \end{minipage}
    \vspace{1ex}
    \caption{\modifyb{Ablation study of code coverage achieved by \toolname and a variant using only SQLancer’s grammar-based generator on DuckDB over 24-hour runs.} }
    \label{fig:code-coverage-ablation}
\end{figure}

\section{Discussion}

\paragraph{Additional finding}
\begin{figure}[t!]
\centering
\begin{lstlisting}[label={lst:sqlsolver-bug}, caption={Incorrect equivalence proof in SQLSolver~\cite{ding2023proving}.}, captionpos=b, escapeinside={(*}{*)},]
CREATE TABLE t1(c0 BOOLEAN, c1 BOOLEAN);
SELECT false FROM t1;
SELECT (t1.c1 < t1.c0) < (t1.c1 = t1.c1) FROM t1;
\end{lstlisting}
\vspace{-3ex}
\end{figure}
In addition to improving DBMS robustness, our testing also revealed several bugs in existing SQL equivalence provers. These bugs could lead to inherent false positives that, if unaddressed, would hinder large-scale testing.
For example, an early version of SQLSolver~\cite{ding2023proving} incorrectly proved the two queries in Listing~\ref{lst:sqlsolver-bug} to be equivalent: the first query always returns \texttt{false} when \texttt{t1} is non-empty, whereas the second query can return \texttt{true} or even \texttt{NULL} depending on the values of \texttt{c0} and \texttt{c1}.
Such issues are common among research prototypes that have been validated only on small benchmarks --- for instance, the widely used Calcite test cases~\cite{begoli2018apache} include only about 200 equivalent query pairs.
We reported 10 bugs in SQLSolver and QED~\cite{wang2024qed} during early development of \toolname, all of which were quickly fixed by the developers. After these fixes, our testing proceeded without being affected by those prover bugs.
These findings demonstrate that our approach not only helps uncover DBMS bugs but also contributes to improving the reliability of SQL equivalence provers. Importantly, the soundness of our approach is independent of the correctness of specific prover implementations—the latter only affects practical false positives before fixing.

\paragraph{Customized testing}
\modifya{
    \toolname is highly configurable and supports a broad range of testing scenarios. By adjusting the LLM generation prompt, users can focus on specific SQL features without modifying the underlying code, enabling flexible and comprehensive testing.
    For example, to generate table snippets involving \texttt{OUTER}~\texttt{JOIN}s, the prompt can explicitly request: 
    ``Ensure that the generated snippet includes at least one \texttt{OUTER}~\texttt{JOIN} (\emph{e.g.,} \texttt{LEFT}~\texttt{JOIN}, \texttt{RIGHT}~\texttt{JOIN}, \texttt{FULL}~\texttt{OUTER}~\texttt{JOIN}, $\ldots$)''.
}

\paragraph{Query classes supported by SQL equivalence provers}
\modifyc{
    SQL equivalence provers is a key component of \toolname, but typically only support a limited set of SQL features due to the inherent complexity of SQL semantics.
    To the best of our knowledge, existing SQL equivalence provers support the core syntax implemented in Calcite~\cite{begoli2018apache}, including outer joins, nested queries, and simple aggregations like \texttt{SUM} and \texttt{COUNT}.
    However, they generally do not understand the semantics of most DBMS-specific or user-defined functions, treating them as uninterpreted functions, which may result in omitting some potential test oracles.
    For example, if a DBMS defines a function \texttt{add}$(x, y) := x + y$, the prover cannot determine whether \texttt{SELECT} \texttt{add}$(x, y)$ \texttt{FROM} $t$ is equivalent to \texttt{SELECT} $x + y$ \texttt{FROM} $t$.
    Another limitation stems from the fact that SQL equivalence provers heavily rely on SMT solvers, whose complexities can grow rapidly with the size of the CAQ pairs analyzed. 
    This may limit the query sizes of the test oracles that \toolname can generate.
    Nevertheless, these restrictions from SQL equivalence provers can be mitigated in \toolname's test cases instantiation phase, which can generate complex SQL queries by filling in CAQ placeholders with diverse SQL snippets.
    Those snippets can include SQL features that currently lack support in existing provers, as well as increase the overall complexity of the generated test cases.
}

\paragraph{Limitations} 
Our approach has two limitations that present opportunities for future work.
\begin{enumerate}[leftmargin=*, itemsep=2pt]
    \item \toolname currently focuses on relational DBMS and \texttt{SELECT} queries like prior works~\cite{jiang2024detecting, ba2023testing, rigger2020detecting, rigger2020testing, rigger2020finding, zhang2025constant, ba2024keep}, but the underlying principles can be extended to other types of databases and query languages, such as graph DBMS~\cite{mang2025finding, liu2024testing, zhuang2023testing,jiangy2024detecting} and spatial DBMS testing~\cite{deng2024finding}.
    \item \toolname focuses on automatically discovering test oracles, but does not provide a mechanism to prioritize or rank the generated oracles. We believe that, after \toolname, DBMS testing research can shift from manually crafting test oracles to developing techniques for prioritizing and selecting the most effective oracles from a large pool of LLM-generated candidates.
\end{enumerate}


\begin{section}{Related Work}
\paragraph{LLM in systems research} 
LLMs have been applied to various system research tasks, including code generation~\cite{hong2025autocomp, mundler2025type, li2025s, xia2024automated, patilberkeley}, automated tuning~\cite{giannakouris2025lambda, zhang2023using}, data processing~\cite{zeighami2025cut, lin2025querying, lin2024towards, patel2024semantic, shankar2024docetl}, program analysis~\cite{li2024enhancing}, networking~\cite{zhang2024caravan}, and algorithm discovery~\cite{novikov2025alphaevolve, rybin2025xx, yu2025autonomous}. 
\begin{enumerate}[leftmargin=*, itemsep=2pt]
    \item In LLM-aided testing, prior work has primarily focused on generating effective inputs. For example, Fuzz4all~\cite{xia2024fuzz4all} is a universal fuzzing framework that leverages LLMs to generate test inputs for various applications, while other research applies LLMs to fuzzing in specific domains such as compilers~\cite{yang2024whitefox}, kernels~\cite{yang2025kernelgpt}, and smart contracts~\cite{shou2024llm4fuzz}. Similarly, ShQvel~\cite{zhong2025testing} uses LLMs to generate feature-rich SQL queries for testing DBMS with manually crafted test oracles, while SQLStorm~\cite{schmidt2025sqlstorm} employs LLMs to synthesize a large-scale benchmark for evaluating performance of DBMSs.
    While prior approaches generate complex test inputs to find crash bugs, our work focuses on generating test oracles to detect logic bugs, along with effective methods for test oracle initializations, which are orthogonal and complementary to existing efforts.
    \item In LLM-for-DBMS, prior work mainly focus on text-to-SQL~\cite{hong2024next}, query optimization~\cite{li2024llm, tan2025can}, dialect translation~\cite{zhou2025cracksql}, and hyperparameter tuning~\cite{giannakouris2025lambda, zhou2024db}. 
    Among these works, LLM-R2~\cite{li2024llm} provides a framework to guide LLMs' use of existing rules, \emph{e.g.,} Calcite, to optimize SQL queries.
    Though this method can also be applied to our task, it is impractical because of the high cost of rewriting concrete queries for testing and the difficulty of finding bugs by DBMS's query rewriting rules, which are typically simple and well-tested.
    To the best of our knowledge, our work is the first to leverage LLMs to generate test oracles for DBMSs, and we believe that our rule generation framework can also be applied to other DBMS tasks in the future, such as query optimization with performance improvement.
\end{enumerate}

\paragraph{DBMS testing and verification} 
There is a rich body of work on DBMS testing, verification, and their applications.
    Prior work in DBMS testing has largely followed two distinct research paths: \emph{test query generation} for fuzzing and \emph{test oracle design} for detecting logic bugs.
    Test query generation focuses on automatically generating complex SQL queries to find crashes, with techniques being either grammar-based, exemplified by tools like SQLsmith~\cite{Seltenreich2022sqlsmith} and SQLancer~\cite{rigger2020testing}, or mutation-based, as seen in Griffin~\cite{fu2023griffin}. Grammar-based techniques typically use predefined grammars to construct valid SQL queries, while mutation-based approaches modify existing queries to create new test cases.
    \modifyc{Beyond the generation of test queries, recent research has also explored constructing meaningful database states~\cite{arasu2011data,binnig2007qagen,chen2025parseval,sanghi2023synthetic,yang2022sam}.}
    In parallel, the second path concentrates on test oracles for detecting ``silent'' bugs such as logic and performance issues, an area that has traditionally relied on significant manual effort. For instance, oracles like TLP~\cite{rigger2020finding}, NoREC~\cite{rigger2020detecting}, and EET~\cite{jiang2024detecting} were designed for logic bugs, while others such as CERT~\cite{ba2024cert} and Apollo~\cite{jung2019apollo} target performance issues. This manual crafting of oracles remains a primary bottleneck in fully automated DBMS testing. \toolname is the first work that tackles this critical challenge by automatically discovering test oracles.

    Recent work has focused on improving the effectiveness of SQL equivalence provers~\cite{zhou2020spes, wang2024qed,ding2023proving,chu2017cosette,chu2017hottsql} and disprovers~\cite{he2024verieql,zhao2025polygon}. These tools have been widely applied in various DBMS scenarios, including query rewriting~\cite{wang2022wetune}, text-to-SQL~\cite{yang2025automated}, and user-database interaction~\cite{purich2025adaptive}. Among these, Wetune~\cite{wang2022wetune} is the most relevant to our work, as it employs an SQL equivalence prover to verify query rewriting rules generated via a brute-force search. However, its enumeration-based approach suffers from two key limitations. First, it can only discover simple rules due to the high time complexity of enumeration. Second, because it lacks a placeholder mechanism like CAQ, the resulting rules support a very limited set of SQL features.
    \modifyc{For example, the average AST node count of SQL queries generated by Wetune’s rules is 99.6, whereas in \toolname test cases it reaches 777.8. This indicates that \toolname targets much richer and complex SQL structures than enumeration-based tools.}
    These restrictions render the approach impractical for testing modern DBMSs. In contrast, our LLM-based approach overcomes these shortcomings by generating complex rules and instantiating them with a diverse range of SQL snippets.
\end{section}
\section{Conclusion}
The manual construction of test oracles is a major bottleneck in finding logic bugs in modern DBMSs.
We present \toolname, the first automated framework that uses LLMs to generate sound and effective test oracles. \toolname achieves scalability through its novel CAQ abstraction and guarantees soundness with an SQL equivalence prover. On five extensively-tested DBMSs, \toolname discovered \bugsnumber previously unknown bugs, including \logicnumber logic bugs. 
Compared to prior work, \toolname's LLM-generated test oracles demonstrate improved metamorphic coverage and a stronger logic-bug finding abilities.
\toolname's approach not only provides a solution to the long-standing bottleneck of fully automated DBMS testing but also opens new opportunities for testing other complex software with LLMs.

\section{Ackowledgements}
We thank Jinsheng Ba and Manuel Rigger for their insightful feedback, Haoran Ding, Sicheng Pan, and Shuxian Wang for their support on SQL equivalence solvers, and the DBMS developers for confirming and fixing the reported bugs. This work is supported by NSF grants IIS-1955488, IIS-2027575, DOE awards DE-SC0016260, AC02-05CH11231, and DARPA Agreement No. HR00112590131.

\bibliographystyle{ACM-Reference-Format}

\clearpage
\bibliography{references}

\clearpage
\appendix
\counterwithin{lstlisting}{section}
\setcounter{lstlisting}{0}

\section{Omitted Algorithm}
\label{sec:appendix-algorithm}
\begin{algorithm}[h!]
    \caption{\small $\mathsf{GuidedCAQPairsSampling}$}
    \small
    \KwIn{
        A DBMS $\mathcal{D}$, a set of equivalent CAQs $\mathsf{Equal}$, a number of maximal number of samples to return $\mathsf{MaxSamples}$, a number of maximal iterations for clustering $\mathsf{MaxIters}$.
    }
    \KwOut{A sampled set of equivalent CAQ pairs $\mathsf{Sam}$.}
    \SetKwFunction{FMain}{$\mathsf{Distance}$}
    \SetKwProg{Fn}{Function}{:}{}
    \Fn{\FMain{$\mathcal{D}, \mathsf{caq}, \mathsf{cluster}$}}{
        $\mathsf{MinDistance}$ $\leftarrow +\infty$\;
        \ForEach{$\mathsf{q} \in \mathsf{cluster}$}{
            $\mathsf{plan}_1$ $\leftarrow \mathcal{D}.\mathsf{Explain}(\mathsf{caq})$\;
            $\mathsf{plan}_2$ $\leftarrow \mathcal{D}.\mathsf{Explain}(\mathsf{q})$\;
            $\mathsf{dist}$ $\leftarrow \mathsf{TreeEditDistance}(\mathsf{plan}_1, \mathsf{plan}_2)$\;
            \If{$\mathsf{dist} < \mathsf{MinDistance}$}{
                $\mathsf{MinDistance} \leftarrow \mathsf{dist}$\;
            }
        }
        \Return $\mathsf{MinDistance}$\;
    }
    \If{$|\mathsf{Equal}| \leq \mathsf{MaxSamples}$}{
        $\mathsf{Sam} \leftarrow \mathsf{Equal}$\;
        \Return $\mathsf{Sam}$\;
    }
    $\mathsf{Clusters} \leftarrow \mathsf{RandomSample}(\mathsf{Equal}, \mathsf{MaxSamples})$;\\
    \For{$\mathsf{iter} \leftarrow 1$ \KwTo $\mathsf{MaxIters}$}{
        $\mathsf{NewClusters} \leftarrow \{\}$\;
        \ForEach{$\mathsf{caq} \in \mathsf{Equal}$}{
            $i^* \leftarrow \arg\min_i \mathsf{Distance}(\mathcal{D}, \mathsf{caq}, \mathsf{Clusters}[i])$\;
            $\mathsf{NewClusters}[i^*] \leftarrow \mathsf{NewClusters}[i^*] \cup \{\mathsf{caq}\}$\;
        }
        \If{$\mathsf{NewClusters} = \mathsf{Clusters}$}{
            break\;
        }
        $\mathsf{Clusters} \leftarrow \mathsf{NewClusters}$\;
    }
    $\mathsf{Sam} \leftarrow \{\}$\;
    \ForEach{$\mathsf{cluster} \in \mathsf{Clusters}$}{
        $\mathsf{caq} \leftarrow \mathsf{RandomSample}(\mathsf{cluster}, 1)$\;
        $\mathsf{Sam} \leftarrow \mathsf{Sam} \cup \{\mathsf{caq}\}$\;
    }
    \Return $\mathsf{Sam}$\;
\end{algorithm}
\section{Counter-Examples and Proofs}
\label{sec:appendix-proof}

In this section, we provide three counter-examples to demonstrate that the potential false positives if we do not apply the three general constraints of $\mathcal{C}$ mentioned in \Cref{sec:query-instantiation}.
Specifically, Listing~\ref{lst:counter-example-1}, Listing~\ref{lst:counter-example-2}, and Listing~\ref{lst:counter-example-3} show three cases that can be formally proven in the scenario of virtual columns, but lead to false positives in the instantiation phase by missing the constraints.

\begin{figure}[h!]
\centering
\begin{lstlisting}[
    linewidth=\linewidth, % Set listing width to match the minipage
    escapeinside={(*}{*)},
    caption={False positive example when instantiating non-deterministic snippets.},
    captionpos=b,
    label={lst:counter-example-1}
]
CREATE TABLE t1(c0 BOOLEAN);
INSERT INTO t1 VALUES (TRUE);
SELECT * FROM t1 
WHERE (*${\square_1}\triangleright\placeholder{\textsf{Expr}(t:\textsf{BOOLEAN})} \mapsto$*) (*\texttt{\textcolor{NavyBlue}{RANDOM() < 0.5}}*);
-- {unstably return 1 row or 0 rows}
SELECT * FROM t1 
WHERE (*${\square_1}\triangleright\placeholder{\textsf{Expr}(t:\textsf{BOOLEAN})} \mapsto$*) (*\texttt{\textcolor{NavyBlue}{RANDOM() < 0.5}}*);
-- {unstably return 1 row or 0 rows}
\end{lstlisting}
\end{figure}

\begin{figure}[h!]
\centering
\begin{lstlisting}[
    linewidth=\linewidth, % Set listing width to match the minipage
    escapeinside={(*}{*)},
    caption={False positive example when instantiating snippets without null-preserving property.},
    captionpos=b,
    label={lst:counter-example-2}
]
CREATE TABLE t1(c0 INT);
CREATE TABLE t2(c0 INT);
INSERT INTO t1 VALUES (1);
INSERT INTO t2 VALUES (1);
SELECT (*${\square_1}\triangleright\placeholder{\textsf{Expr}(t:\textsf{INT})} \mapsto$*) (*\texttt{\textcolor{NavyBlue}{NULLIF(t2.c0, 0)}}*) FROM t1 LEFT JOIN t2 WHERE false; -- {0}
SELECT NULL FROM t1 LEFT JOIN t2  WHERE false; -- {NULL}
\end{lstlisting}
\end{figure}

\begin{figure}[h!]
\centering
\begin{lstlisting}[
    linewidth=\linewidth, % Set listing width to match the minipage
    escapeinside={(*}{*)},
    caption={False positive example when instantiating snippets without empty-preserving property.},
    captionpos=b,
    label={lst:counter-example-3}
]
CREATE TABLE t1(c0 INT);
INSERT INTO t1 VALUES (1);
SELECT * FROM t1 WHERE false; -- {0 rows}
SELECT * FROM t1 WHERE EXISTS (
    SELECT (*${\square_1}\triangleright\placeholder{\textsf{Expr}(t:\textsf{INT})} \mapsto$*) (*\texttt{\textcolor{NavyBlue}{SUM(c0)}}*) FROM t1 WHERE false ) ; -- {1 row}
\end{lstlisting}
\end{figure}

We use these constraints to ensure that snippets replacing column expressions are deterministic, preserve null values during outer joins, and refer to a unique row's column combination in the cases other than outer joins.
We then prove the following theorem to show that if we follow these constraints, the equivalence is preserved after instantiating expressions.
\begin{theorem}[Expression Instantiation Preserves Equivalence]
Consider two SQL queries, $Q_1$ and $Q_2$, that are \textbf{semantically equivalent} on a schema $S$ containing virtual columns $\{v_i\}$. Let $Q_1'$ and $Q_2'$ be the queries created by replacing every occurrence of each virtual column $v_i$ with its corresponding expression $e_i$. If each expression $e_i$ references a \textbf{unique row's} column combination in all cases other than outer joins, then $Q_1'$ and $Q_2'$ are also \textbf{semantically equivalent} on the schema $S'$, which is $S$ without the virtual columns.
\end{theorem}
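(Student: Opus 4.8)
The plan is to prove the theorem by a \emph{semantic simulation} argument: I will construct a map that lifts every database instance of the reduced schema $S'$ back to an instance of the virtual schema $S$, show that each instantiated query run on the lifted instance computes exactly what the original (virtual-column) query computes, and then transport the given equivalence $Q_1 \equiv Q_2$ across this correspondence.

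First I would fix a denotational, bag-semantics model of SQL — the same U-semiring / tuple-multiplicity model that the equivalence prover $\mathcal{P}$ reasons about — so that $[[Q]]$ is a function from instances to bags of tuples and $Q_1 \equiv Q_2$ means $[[Q_1]](I) = [[Q_2]](I)$ for every instance $I$ of $S$. The key construction is a lifting map $\phi : \mathrm{Inst}(S') \to \mathrm{Inst}(S)$: given an instance $I'$ of $S'$, for each table $t$ and each row $r$ of $t$ I set the value of every virtual column $v_i$ attached to $t$ to $[[e_i]](r)$, the instantiating expression evaluated on that row. The hypothesis that $e_i$ is deterministic and references a \emph{unique row's} column combination is precisely what makes this assignment well defined: the value of $e_i$ is then a genuine function of a single tuple of $t$, so it can be materialized as a per-row column value rather than depending on the surrounding query context.

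The heart of the proof is the simulation lemma $[[Q'_j]](I') = [[Q_j]](\phi(I'))$ for $j \in \{1,2\}$, proved by structural induction on the query. Base cases for table scans and column references are immediate by construction of $\phi$ — reading the virtual column $v_i$ on $\phi(I')$ returns $[[e_i]](r)$, which is exactly what the substituted expression evaluates to in $Q'_j$ on $I'$. Selection, projection, inner join, union, and grouping follow compositionally. The delicate case is the outer join, where an unmatched row is padded with \texttt{NULL}s: in the virtual world the whole padded tuple, including $v_i$, becomes \texttt{NULL}, whereas in the instantiated world the real columns referenced by $e_i$ become \texttt{NULL} and we must evaluate $e_i$ on a \texttt{NULL} tuple. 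Matching these requires exactly the \emph{null-preserving} constraint of $\mathcal{C}$ — hence the theorem's careful phrasing that the unique-row condition is assumed only ``in all cases other than outer joins,'' with null-preservation taking over there.

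With the simulation lemma in hand the conclusion is short: for any instance $I'$ of $S'$ we obtain $[[Q'_1]](I') = [[Q_1]](\phi(I')) = [[Q_2]](\phi(I')) = [[Q'_2]](I')$, where the middle equality is the hypothesis $Q_1 \equiv Q_2$ applied to the valid $S$-instance $\phi(I')$, and the outer two are the simulation lemma. Since $I'$ is arbitrary, $Q'_1 \equiv Q'_2$ on $S'$. I expect the main obstacle to be the inductive step over the full SQL surface in the simulation lemma — in particular discharging the outer-join/null-padding case, and confirming that the unique-row hypothesis genuinely rules out configurations (for instance, references spanning several rows of a cross join) in which $[[e_i]]$ could not be captured as a single column of $\phi(I')$; the empty-preserving constraint plays an analogous guarding role for aggregate subqueries over possibly empty inputs.
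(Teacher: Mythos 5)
Your proof is correct and takes essentially the same approach as the paper's: both materialize the instantiating expressions $e_i$ as per-row values of the virtual columns to lift an $S'$-instance to an $S$-instance, argue that each instantiated query on the original instance agrees with the virtual-column query on the lifted instance (unique-row reference in the standard case, null-preservation for outer-join padding), and then transport the hypothesis $Q_1 \equiv Q_2$ across this correspondence. The only cosmetic difference is framing: you state the correspondence as an explicit simulation lemma and conclude directly, whereas the paper wraps the identical construction in a proof by contradiction.
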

\begin{proof}
We prove this by contradiction. Assume that $Q_1'$ and $Q_2'$ are not semantically equivalent on schema $S'$, meaning there exists a database instance $D'$ on $S'$ where $Q_1'(D') \neq Q_2'(D')$. We then construct a database $D$ on schema $S$ by augmenting $D'$ with the virtual columns $\{v_i\}$ defined by their corresponding expressions $\{e_i\}$. The core of this proof is establishing that for any query $Q_k$, the results $Q_k(D)$ and $Q_k'(D')$ are identical. This equivalence is guaranteed by two properties of the expressions $\{e_i\}$. First, the \textbf{unique row reference} property ensures that for any standard join or matched row, the value of a virtual column $v_i$ in $D$ is computed from the same single row as its expression $e_i$ in $D'$, yielding an identical result. Second, and crucially for outer joins, the \textbf{null-preserving} property handles unmatched rows. If an outer join produces a row where the columns underlying an expression $e_i$ are padded with \texttt{NULL}, the expression $e_i$ itself evaluates to \texttt{NULL} in query $Q_k'$. The null-preserving property explicitly forces the corresponding virtual column $v_i$ in query $Q_k$ to also resolve to \texttt{NULL} in this exact scenario. 
This perfect mirroring of behavior ensures that $Q_1(D) = Q_1'(D')$ and $Q_2(D) = Q_2'(D')$. 
However, since $Q_1'(D') \neq Q_2'(D')$ by our initial assumption, it follows that $Q_1(D) \neq Q_2(D)$, contradicting the premise that $Q_1$ and $Q_2$ are semantically equivalent on schema $S$.
\end{proof}

This theorem shows that, under the scenario of only virtual columns, the equivalence is preserved after instantiating expressions if we follow the constraints mentioned in \Cref{sec:query-instantiation}.
When there are virtual tables, it is much easier to show the preserved equivalence after instantiating tables, since the tables are simply replaced by other tables with the same schema.
If there is no non-deterministic rows in the instantiated tables, the equivalence is also preserved after instantiating tables, because such instantiation is equivalent to using a \texttt{CREATE VIEW} to replace the \texttt{CREATE TABLE} for virtual tables.

Note that, while these proofs do not cover all SQL features, especially for those new features in future SQL standards, we have not found any counterexamples in our experiments that violate the equivalence after instantiating expressions following these constraints.
In practice, we can maintain the soundness of instantiated queries by disabling any new features that invalidate existing proofs.
\section{Representing Existing Oracles}

\label{sec:appendix-oracle}
\begin{figure}[h!]
\begin{lstlisting}[
    linewidth=\linewidth, % Set listing width to match the minipage
    escapeinside={(*}{*)},
    caption={TLP~\cite{rigger2020finding} in CAQ.},
    captionpos=b,
    label={lst:tlp-in-caq-appendix}
]
(1) SELECT * FROM t1, (*${\square_1}\triangleright\placeholder{\textsf{Table}(...)}$*);
(2) SELECT * FROM t1, (*${\square_1}\triangleright\placeholder{\textsf{Table}(...)}$*) WHERE (*${\square_2}\triangleright\placeholder{\textsf{Expr}(t1:\textsf{BOOLEAN})}$*) AND (*${\square_1}\triangleright\placeholder{\textsf{Table}(...)}$*).c0 IS TRUE UNION ALL 
    SELECT * FROM t1, (*${\square_1}\triangleright\placeholder{\textsf{Table}(...)}$*) WHERE (*${\square_2}\triangleright\placeholder{\textsf{Expr}(t1:\textsf{BOOLEAN})}$*) AND (*${\square_1}\triangleright\placeholder{\textsf{Table}(...)}$*).c0 IS FALSE UNION ALL 
    SELECT * FROM t1, (*${\square_1}\triangleright\placeholder{\textsf{Table}(...)}$*) WHERE (*${\square_2}\triangleright\placeholder{\textsf{Expr}(t1:\textsf{BOOLEAN})}$*) AND (*${\square_1}\triangleright\placeholder{\textsf{Table}(...)}$*).c0 IS NULL;
\end{lstlisting}
\end{figure}
\begin{figure}[h!]
\begin{lstlisting}[
    linewidth=\linewidth, % Set listing width to match the minipage
    escapeinside={(*}{*)},
    caption={NoREC~\cite{rigger2020detecting} in CAQ.},
    captionpos=b,
    label={lst:norec-in-caq-appendix}
]
(1) SELECT COUNT((***)) FROM t1, (*${\square_1}\triangleright\placeholder{\textsf{Table}(...)}$*) WHERE (*${\square_2}\triangleright\placeholder{\textsf{Expr}(t1:\textsf{BOOLEAN})}$*) AND (*${\square_1}\triangleright\placeholder{\textsf{Table}(...)}$*).c0; 
(2) SELECT SUM(cnt) FROM (SELECT ((*${\square_2}\triangleright\placeholder{\textsf{Expr}(t1:\textsf{BOOLEAN})}$*) AND (*${\square_1}\triangleright\placeholder{\textsf{Table}(...)}$*).c0) 
    IS TRUE AS cnt FROM t1, (*${\square_1}\triangleright\placeholder{\textsf{Table}(...)}$*));
\end{lstlisting}
\end{figure}
\begin{figure}[h!]
\begin{lstlisting}[
    linewidth=\linewidth, % Set listing width to match the minipage
    escapeinside={(*}{*)},
    caption={EET~\cite{jiang2024detecting} in CAQ.},
    captionpos=b,
    label={lst:eet-in-caq-appendix}
]
(1) SELECT * FROM t1, (*${\square_1}\triangleright\placeholder{\textsf{Table}(...)}$*) WHERE (*${\square_2}\triangleright\placeholder{\textsf{Expr}(t1:\textsf{BOOLEAN})}$*);
(2) SELECT * FROM t1, (*${\square_1}\triangleright\placeholder{\textsf{Table}(...)}$*) WHERE (((*${\square_1}\triangleright\placeholder{\textsf{Table}(...)}$*).c0 AND (NOT (*${\square_1}\triangleright\placeholder{\textsf{Table}(...)}$*).c0) 
    AND ((*${\square_1}\triangleright\placeholder{\textsf{Table}(...)}$*).c0 IS NOT NULL))) OR (*${\square_2}\triangleright\placeholder{\textsf{Expr}(t1:\textsf{BOOLEAN})}$*);
(3) SELECT * FROM t1, (*${\square_1}\triangleright\placeholder{\textsf{Table}(...)}$*) WHERE CASE WHEN 
    (((*${\square_1}\triangleright\placeholder{\textsf{Table}(...)}$*).c0 AND (NOT (*${\square_1}\triangleright\placeholder{\textsf{Table}(...)}$*).c0) AND ((*${\square_1}\triangleright\placeholder{\textsf{Table}(...)}$*).c0 IS NOT NULL)))
    THEN (*${\square_3}\triangleright\placeholder{\textsf{Expr}(t1:\textsf{BOOLEAN})}$*) ELSE (*${\square_2}\triangleright\placeholder{\textsf{Expr}(t1:\textsf{BOOLEAN})}$*) END;
(4) SELECT * FROM t1, (*${\square_1}\triangleright\placeholder{\textsf{Table}(...)}$*) WHERE CASE WHEN 
    (*${\square_1}\triangleright\placeholder{\textsf{Table}(...)}$*).c0 THEN (*${\square_2}\triangleright\placeholder{\textsf{Expr}(t1:\textsf{BOOLEAN})}$*) ELSE (*${\square_2}\triangleright\placeholder{\textsf{Expr}(t1:\textsf{BOOLEAN})}$*) END;
\end{lstlisting}
\end{figure}
\begin{figure}[h!]
\centering
\begin{lstlisting}[
    linewidth=\linewidth, % Set listing width to match the minipage
    escapeinside={(*}{*)},
    caption={DQP~\cite{ba2024keep} in CAQ.},
    captionpos=b,
    label={lst:dqp-in-caq-appendix}
]
(1) SELECT * FROM t1 JOIN (*${\square_1}\triangleright\placeholder{\textsf{Table}(...)}$*) ON (*${\square_2}\triangleright\placeholder{\textsf{Expr}(t1:\textsf{ANY})}$*) = (*${\square_1}\triangleright\placeholder{\textsf{Table}(...)}$*).c0;  
(2) /*+ JOIN_ORDER((*${\square_1}\triangleright\placeholder{\textsf{Table}(...)}$*), t1) */ 
    SELECT * FROM t1 JOIN (*${\square_1}\triangleright\placeholder{\textsf{Table}(...)}$*) ON (*${\square_2}\triangleright\placeholder{\textsf{Expr}(t1:\textsf{ANY})}$*) = (*${\square_1}\triangleright\placeholder{\textsf{Table}(...)}$*).c0; 
(3) /*+ LOOKUP_JOIN((*${\square_1}\triangleright\placeholder{\textsf{Table}(...)}$*), t1) */
    SELECT * FROM t1 JOIN (*${\square_1}\triangleright\placeholder{\textsf{Table}(...)}$*) ON (*${\square_2}\triangleright\placeholder{\textsf{Expr}(t1:\textsf{ANY})}$*) = (*${\square_1}\triangleright\placeholder{\textsf{Table}(...)}$*).c0; 
(4) /*+ MERGE_JOIN((*${\square_1}\triangleright\placeholder{\textsf{Table}(...)}$*), t1) */
    SELECT * FROM t1 JOIN (*${\square_1}\triangleright\placeholder{\textsf{Table}(...)}$*) ON (*${\square_2}\triangleright\placeholder{\textsf{Expr}(t1:\textsf{ANY})}$*) = (*${\square_1}\triangleright\placeholder{\textsf{Table}(...)}$*).c0;
(5) /*+ HASH_JOIN((*${\square_1}\triangleright\placeholder{\textsf{Table}(...)}$*), t1) */
    SELECT * FROM t1 JOIN (*${\square_1}\triangleright\placeholder{\textsf{Table}(...)}$*) ON (*${\square_2}\triangleright\placeholder{\textsf{Expr}(t1:\textsf{ANY})}$*) = (*${\square_1}\triangleright\placeholder{\textsf{Table}(...)}$*).c0; 
(6) /*+ INNER_JOIN((*${\square_1}\triangleright\placeholder{\textsf{Table}(...)}$*), t1) */
    SELECT * FROM t1 JOIN (*${\square_1}\triangleright\placeholder{\textsf{Table}(...)}$*) ON (*${\square_2}\triangleright\placeholder{\textsf{Expr}(t1:\textsf{ANY})}$*) = (*${\square_1}\triangleright\placeholder{\textsf{Table}(...)}$*).c0;
\end{lstlisting}
\end{figure}

\newtcblisting{sqlbox}[1][]{%
    colback=gray!5,
    colframe=gray!75!black,
    title=#1,
    fonttitle=\bfseries\footnotesize,
    listing only,
    listing engine=listings,
    left=6mm,
    breakable,
    listing options={
        language=SQL,
        basicstyle=\ttfamily\footnotesize,
        numbers=left,
        numberstyle=\tiny,
        breaklines=true,
        columns=fullflexible,
        keepspaces=true,
        showstringspaces=false,
        escapeinside=||
    },
    #1
}

\section{Generated SQL Queries}
\label{sec:generated-sql-queries}

\begin{sqlbox}[title = Query 1 of Argus]
SELECT t8.c3,
       t0.c2,
       t1.c3,
       t1.c2,
       t8.c0,
       t0.c3,
       (CASE WHEN ( (div(t3.c3 + 10, nullif(( EXTRACT(DAY FROM (DATE '2000-01-01' + t3.c3 * INTERVAL '1 day'))::int ), 1)) * mod(( EXTRACT(DAY FROM (DATE '2000-01-01' + t3.c3 * INTERVAL '1 day'))::int ) + 5, 3))::int ) = 0 THEN t3.c0 ELSE t3.c2 END),
       t8.c1,
       ((~ (~ ((-1518331885)&(( (+ ((t8.c2)/(((2087249086)#(760045829))))) )))))),
       vtable0.c0,
       t3.c3,
       vtable0.c3,
       (CASE WHEN t1.c1 IS NULL THEN ( char_length(( (REPLACE(UPPER(t1.c1), 'A', '4') || LPAD(t1.c1::TEXT, 3, '0'))::VARCHAR )) * t1.c0 ) > 0 ELSE length(t1.c1) > 3 END)
FROM t0
CROSS JOIN t8
CROSS JOIN t1
CROSS JOIN t3
CROSS JOIN vtable0
WHERE 1<>1;
\end{sqlbox}

\begin{sqlbox}[title = Query 2 of Argus]
SELECT dt.t0_c2,
       vtable0.c2,
       dt.t0_c1,
       dt.t8_c1
FROM t1
LEFT JOIN vtable0
  ON NOT (t1.c1 BETWEEN (md5(t1.c1 || ( regexp_replace(t1.c1, '[aeiou]', '*', 'gi') ))) AND t1.c1)
CROSS JOIN (
  SELECT t0.c2 AS t0_c2,
         t0.c1 AS t0_c1,
         t8.c1 AS t8_c1
  FROM t8
  INNER JOIN t0
    ON t8.c1 <> (t8.c1 || t0.c1)
) AS dt
WHERE (( COALESCE(length(( to_char(t1.c2 * t1.c3, 'FM9999999')::varchar )), 0) - ( EXTRACT(MONTH FROM (date '2000-01-01' + t1.c3 * interval '1 day'))::int ) ) = ANY(ARRAY[t1.c3, t1.c3 * 2, 0])) IS NULL;
\end{sqlbox}

\begin{sqlbox}[title = Query 3 of Argus]
SELECT (bit_count(( translate(md5(( (CASE WHEN t0.c0 THEN t0.c1 ELSE t0.c0 END) )), 'abcdef', 'FEDCBA') )::bytea)::integer),
       (translate(( (t1.c3::text || ':' || ( ( EXTRACT(EPOCH FROM (TO_TIMESTAMP('2000-01-01','YYYY-MM-DD') + (t1.c3 || ' days')::INTERVAL))::INT ) + t1.c3 * 3 - (t1.c3 & t1.c3) )::text) ), 'aeiou', '12345')::varchar),
       t1.c3,
       (t3.c3 
       (REPLACE(t0.c1, t0.c0, REVERSE(t0.c0 || t0.c1))),
       vtable0.c2,
       vtable0.c1,
       t1.c0,
       vtable0.c3,
       vtable1.c2,
       (bit_length(t0.c1) / 2 + octet_length(( LPAD(t0.c1, 10, SUBSTRING(t0.c0 FROM 1 FOR 1)) ))),
       t1.c2,
       t3.c0
FROM t0, t1, vtable0, vtable1, t3
WHERE FALSE;
\end{sqlbox}

\begin{sqlbox}[title = Query 4 of Argus]
SELECT
  vtable0.c3,
  t3.c3,
  (get_bit(decode(to_hex(t3.c3), 'hex'), 7) = 1),
  vtable0.c0,
  (char_length(t8.c1) + char_length(coalesce(t8.c3,''))),
  t1.c3,
  vtable0.c2,
  t8.c0,
  (LENGTH(md5(COALESCE(( translate(t0.c1 || t0.c0, 'aeiouAEIOU', '1234512345') ), t0.c0)))),
  (to_char(DATE '2020-01-01' + (t3.c3 || ' days')::interval, 'YYYY-MM-DD')::VARCHAR),
  (regexp_count(( t0.c0 || '_' || ( t0.c0 || '-' || t0.c1 ) ), '[0-9]')),
  t8.c1,
  t0.c2,
  vtable1.c1,
  vtable1.c2,
  t8.c2,
  t3.c0,
  vtable0.c1,
  t0.c0,
  t0.c1,
  ((( encode(sha256(convert_to(( trim(both '#' from t8.c3)::varchar ), 'UTF8')), 'hex') ) LIKE '
  ((t3.c2 AND NOT t3.c1) OR (( extract(epoch from make_interval(days => t3.c3))::int ) 
  ((CASE WHEN ( (SELECT bool_and(x::boolean) FROM jsonb_array_elements_text(jsonb_build_array(t0.c0, t0.c2)) AS a(x)) ) THEN t0.c0 ELSE ( translate(t0.c0, 'aeiou', '12345') ) END)),
  (regexp_count(( t8.c1 || '_' || ( t8.c1 || '-' || t8.c3 ) ), '[0-9]')),
  t1.c2,
  t0.c3,
  (reverse(t1.c1)),
  t8.c3,
  vtable1.c0,
  t1.c1,
  (CASE WHEN t1.c3 > t1.c2 THEN bit_length(t1.c2) ELSE width_bucket(t1.c2, 0, 100, 10) END),
  t3.c2,
  (((((((( regexp_replace(t1.c3 || ( to_char(( (+ (+ ((( (t1.c0 << 2) + (t1.c3 >> 1) ))*(302076923)))) ), 'FM0000') ), '[aeiou]', '*', 'gi') ))||('')))||(((t1.c3)||('rR')))))<=((((('')||('h')))||(((t1.c3)||('5M'))))))),
  vtable1.c3,
  t3.c1
FROM
  vtable0
  LEFT JOIN t8 ON vtable0.c0
  CROSS JOIN t3
  CROSS JOIN vtable1
  CROSS JOIN t1
  CROSS JOIN t0
WHERE
  ((t0.c2 OR t0.c3) AND t0.c3) = (t0.c3 < t0.c2)
;
\end{sqlbox}

\begin{sqlbox}[title = Query 5 of Argus]
SELECT
  t0.c3,
  sub_t1.c0_val,
  vtable1.c2,
  t0.c0,
  (CHAR_LENGTH(t0.c0 || ( REPEAT(t0.c1, GREATEST(1, LEAST(( bit_length(t0.c0) ), 5))) ))),
  (octet_length(translate(t0.c0, 'aeiou', 'AEIOU'))),
  sub_t1.ph1,
  vtable1.c0
FROM t0
CROSS JOIN vtable1
LEFT JOIN (
  SELECT
    t1.c0 AS c0_val,
    (make_interval(days => ( width_bucket(t1.c3, 1, 100, 5) )) < make_interval(days => t1.c2)) AS ph1
  FROM t1
) sub_t1 ON vtable1.c2;
\end{sqlbox}

\begin{sqlbox}[title = Query 6 of Argus]
SELECT t8.c3,
       t0.c2,
       t1.c3,
       t1.c2,
       t8.c0,
       t0.c3,
       ((t3.c3 
       t8.c1,
       (COALESCE(length(t8.c2), 0) - ( EXTRACT(MONTH FROM (date '2000-01-01' + ( GREATEST(( EXTRACT(EPOCH FROM (DATE '2020-01-01' + (t8.c2 || ' days')::interval))::int ), -( EXTRACT(DAY FROM ('2000-01-01'::timestamp + t8.c2 * '1 day'::interval))::int )) ) * interval '1 day'))::int )),
       vtable0.c0,
       t3.c3,
       vtable0.c3,
       (t1.c0 > ( (t1.c0 + COALESCE(( t1.c0 + ( EXTRACT(EPOCH FROM (timestamp '2000-01-01' + (t1.c0 || ' days')::interval))::int ) * LENGTH(t1.c2) ), 0)) * GREATEST(t1.c0 - t1.c2, 1) ))
FROM t0
CROSS JOIN t8
CROSS JOIN t1
CROSS JOIN t3
CROSS JOIN vtable0
LIMIT 0;
\end{sqlbox}

\begin{sqlbox}[title = Query 7 of Argus]
SELECT t8.c2,
       t8.c3,
       t8.c0,
       t0.c2,
       t0.c1,
       vtable0.c1,
       vtable0.c0,
       ((CASE WHEN ( t0.c0 LIKE '
       t0.c3,
       vtable0.c3,
       (ascii(SUBSTRING(( ( (t0.c1 || '_' || ( reverse(substring(( regexp_replace(t0.c1, '[0-9]', 'X', 'g') ) from 1 for 3))::varchar ))::varchar ) || '_' || t0.c0 ) FROM 1 FOR 1)) * CASE WHEN ( regexp_like(t0.c0, 'test$', 'i') ) THEN 1 ELSE 2 END),
       (bit_length(( replace(lower(t8.c2),'a','@') || to_char(t8.c2,'FM9999') ))),
       (( split_part(t8.c0 || ':' || ( md5(t8.c0 || t8.c3)::varchar ), ':', 2) ) ~ '^[0-9]+$' AND length(( md5(t8.c0 || t8.c3)::varchar )) > 0)
FROM t8
CROSS JOIN t0
JOIN vtable0
  ON ((ascii(substring(t0.c0 FROM 1 FOR 1)) * (length(coalesce(( translate(t0.c0 || t0.c1, 'aeiou', 'AEIOU') ), '')) + 1)) + vtable0.c1) = ((ascii(substring(t0.c0 FROM 1 FOR 1)) * (length(coalesce(( translate(t0.c0 || t0.c1, 'aeiou', 'AEIOU') ), '')) + 1)) + -492369572)
WHERE vtable0.c3 IS NULL
   OR (ascii(SUBSTRING(( ( (t0.c1 || '_' || ( reverse(substring(( regexp_replace(t0.c1, '[0-9]', 'X', 'g') ) from 1 for 3))::varchar ))::varchar ) || '_' || t0.c0 ) FROM 1 FOR 1)) * CASE WHEN ( regexp_like(t0.c0, 'test$', 'i') ) THEN 1 ELSE 2 END) IS NULL
UNION ALL
SELECT t8.c2,
       t8.c3,
       t8.c0,
       t0.c2,
       t0.c1,
       NULL,
       NULL,
       ((CASE WHEN ( t0.c0 LIKE '
       t0.c3,
       NULL,
       (ascii(SUBSTRING(( ( (t0.c1 || '_' || ( reverse(substring(( regexp_replace(t0.c1, '[0-9]', 'X', 'g') ) from 1 for 3))::varchar ))::varchar ) || '_' || t0.c0 ) FROM 1 FOR 1)) * CASE WHEN ( regexp_like(t0.c0, 'test$', 'i') ) THEN 1 ELSE 2 END),
       (bit_length(( replace(lower(t8.c2),'a','@') || to_char(t8.c2,'FM9999') ))),
       (( split_part(t8.c0 || ':' || ( md5(t8.c0 || t8.c3)::varchar ), ':', 2) ) ~ '^[0-9]+$' AND length(( md5(t8.c0 || t8.c3)::varchar )) > 0)
FROM t8
CROSS JOIN t0
WHERE NOT EXISTS (
  SELECT 1
  FROM vtable0
  WHERE ((ascii(substring(t0.c0 FROM 1 FOR 1)) * (length(coalesce(( translate(t0.c0 || t0.c1, 'aeiou', 'AEIOU') ), '')) + 1)) + vtable0.c1) = ((ascii(substring(t0.c0 FROM 1 FOR 1)) * (length(coalesce(( translate(t0.c0 || t0.c1, 'aeiou', 'AEIOU') ), '')) + 1)) + -492369572)
);
\end{sqlbox}

\begin{sqlbox}[title = Query 8 of Argus]
SELECT (char_length(t0.c1 || t0.c0)),
       (substring(t1.c0 from '[A-Za-z]+') || '-' || to_hex(( (t1.c2 / GREATEST(1, t1.c0)) 
       t1.c3,
       (t3.c3 
       ((t0.c1 || '-' || ( translate(t0.c1, 'aeiou', '12345') ))),
       vtable0.c2,
       vtable0.c1,
       t1.c0,
       vtable0.c3,
       vtable1.c2,
       (char_length(t0.c0 || t0.c1)),
       t1.c2,
       t3.c0
FROM t0, t1, vtable0, vtable1, t3
WHERE FALSE;
\end{sqlbox}

\begin{sqlbox}[title = Query 9 of Argus]
SELECT
  vtable0.c3,
  t3.c3,
  ((t3.c3::text) SIMILAR TO '[0-9]+'),
  vtable0.c0,
  (mod(coalesce(( (bit_length(t8.c1::bytea) + ascii(substring(( REGEXP_REPLACE(t8.c1, 'a', 'b', 'g') ) from 1 for 1))) ),0) * coalesce(( regexp_count(( to_char(DATE '2021-01-01' + t8.c1 * INTERVAL '1 day', 'YYYY-MM-DD') ), '\\w+', 1, 'g') ),1), 10) + regexp_instr(t8.c1, '[0-9]')),
  t1.c3,
  vtable0.c2,
  t8.c0,
  (CASE WHEN ( t0.c1 @@ plainto_tsquery(t0.c0) ) THEN ( get_byte(t0.c0::bytea, 1) + get_byte(t0.c1::bytea, 0) ) ELSE ( regexp_count(t0.c1, '\\w+', 1, 'g') ) END),
  (translate(( to_char(t3.c3, 'FM0000')::varchar ),'abc','ABC')::varchar),
  ((CASE ((((t0.c1)||(t0.c0)))||((CASE t0.c0 WHEN t0.c1 THEN t0.c1 END ))) WHEN (((('')||('')))||(((NULL)||('')))) THEN ((((1574883451)
  t8.c1,
  t0.c2,
  vtable1.c1,
  vtable1.c2,
  t8.c2,
  t3.c0,
  vtable0.c1,
  t0.c0,
  t0.c1,
  (( xml_is_well_formed(t8.c3) ) OR NOT ( (( floor((( coalesce(substring(t8.c3 from E'(\\d+)'), '0')::int + t8.c2 )::numeric / greatest(1, t8.c2 + 1)))::int ) 
  (t3.c3 AND ( (t3.c3 
  (t0.c0 || '-' || t0.c1),
  (family('127.0.0.1'::inet) + masklen('127.0.0.1/24'::inet) - t8.c2),
  t1.c2,
  t0.c3,
  (UPPER(SUBSTRING(t1.c1 FROM 1 FOR GREATEST(3, LENGTH(t1.c1) 
  t8.c3,
  vtable1.c0,
  t1.c1,
  (bit_count((( EXTRACT(YEAR FROM DATE '1990-01-01' + ( length(t1.c3) + t1.c0 * 2 - floor(t1.c3::double precision / nullif(t1.c0,1))::int ) * INTERVAL '1 day')::INT ) + t1.c3)::bit(16))::int),
  t3.c2,
  ((t1.c2::text) SIMILAR TO '[0-9]+'),
  vtable1.c3,
  t3.c1
FROM
  vtable0
  LEFT JOIN t8 ON vtable0.c0
  CROSS JOIN t3
  CROSS JOIN vtable1
  CROSS JOIN t1
  CROSS JOIN t0
WHERE
  ((t0.c2 OR t0.c3) AND t0.c3) = (t0.c3 < t0.c2)
;
\end{sqlbox}

\begin{sqlbox}[title = Query 10 of Argus]
SELECT v.c0, v.c2
FROM vtable0 v
CROSS JOIN LATERAL (SELECT 1 FROM t1) l1
CROSS JOIN LATERAL (SELECT 1 FROM t8) l2
CROSS JOIN LATERAL (SELECT 1 FROM t0 WHERE c2 IS NULL) l3;
\end{sqlbox}

\begin{sqlbox}[title = Query 1 of SQLancer]
SELECT ALL * 
FROM t1 INNER 
JOIN (SELECT - (+ (t0.c0)) 
FROM ONLY t0, ONLY t1 
WHERE TRUE) AS sub0 ON TRUE;
\end{sqlbox}

\begin{sqlbox}[title = Query 2 of SQLancer]
SELECT t0.c0, t1.c0 
FROM t0* 
FULL OUTER JOIN ONLY t1 ON lower((0.961022777314273)::VARCHAR(181))~*lower(CAST(TRUE AS VARCHAR));
\end{sqlbox}

\begin{sqlbox}[title = Query 3 of SQLancer]
SELECT * 
FROM t1* 
WHERE TRUE 
UNION ALL SELECT ALL * 
FROM t1* 
WHERE NOT (TRUE) 
UNION ALL SELECT ALL * 
FROM t1 
WHERE (TRUE) IS NULL;
\end{sqlbox}

\begin{sqlbox}[title = Query 4 of SQLancer]
SELECT t1.c0 
FROM t1;
\end{sqlbox}

\begin{sqlbox}[title = Query 5 of SQLancer]
SELECT t0.c0, t1.c0 
FROM t1* 
LEFT OUTER JOIN t0 ON inet_same_family('179.19.19.249', '240.210.141.13');
\end{sqlbox}

\begin{sqlbox}[title = Query 6 of SQLancer]
SELECT ALL t1.c0 
FROM t1* 
CROSS JOIN (SELECT ALL (t1.c0) BETWEEN (t0.c0) AND (num_nonnulls('')), (family('118.88.171.159')) IN (t1.c0, t0.c0, t1.c0), CAST(0.07644869 AS MONEY) 
FROM ONLY t0, t1) AS sub0;
\end{sqlbox}

\begin{sqlbox}[title = Query 7 of SQLancer]
SELECT ALL t1.c0 
FROM ONLY t1 
WHERE ((0.021889338)::MONEY) IS NULL 
UNION ALL SELECT t1.c0 
FROM ONLY t1 
WHERE NOT (((0.021889338)::MONEY) IS NULL) 
UNION ALL SELECT t1.c0 
FROM ONLY t1 
WHERE (((0.021889338)::MONEY) IS NULL) ISNULL;
\end{sqlbox}

\begin{sqlbox}[title = Query 8 of SQLancer]
SELECT ALL t0.c0 
FROM t1 
LEFT OUTER JOIN ONLY t0 ON NOT ((((((t0.c0)
\end{sqlbox}

\begin{sqlbox}[title = Query 9 of SQLancer]
SELECT ALL t1.c0 
FROM t1 INNER 
JOIN t0 ON (0.506531) IN (t1.c0, 0.46897623, t1.c0);
\end{sqlbox}

\begin{sqlbox}[title = Query 10 of SQLancer]
SELECT ALL t0.c0 
FROM ONLY t0;
\end{sqlbox}


\end{document}